\newtheorem{thm}{Theorem}
\newtheorem{lem}{Lemma}
\newtheorem{cor}{Corollary}
\theoremstyle{definition}
\newtheorem{examp}{Example}
\newtheorem{myremark}{Remark}
\begin{document}

\title{Multi-Server Coded Caching}

\author{ Seyed Pooya Shariatpanahi$^1$ , Seyed Abolfazl Motahari$^{2,1}$, Babak Hossein Khalaj$^{3,1}$ \\[4mm] 

1: School of Computer Science, Institute for Research in Fundamental Sciences (IPM), Tehran, Iran. \\
2: Department of Computer Engineering, Sharif University of Technology, Tehran, Iran. \\
3: Department of Electrical Engineering,
Sharif University of Technology, Tehran, Iran.\\
(emails: pooya@ipm.ir, \{motahari,khalaj\}@sharif.edu)\\ }

\maketitle

\thispagestyle{empty}
\pagestyle{empty}

\begin{abstract}
In this paper, we consider multiple cache-enabled clients connected to multiple servers through an intermediate network. We design several topology-aware coding strategies for such networks. Based on topology richness of the intermediate network, and types of coding operations at internal nodes, we define three classes of networks, namely, dedicated, flexible, and linear networks. For each class, we propose an achievable coding scheme, analyze its coding delay, and also, compare it with an information theoretic lower bound. For flexible networks, we show that our scheme is order-optimal in terms of coding delay and, interestingly, the optimal memory-delay curve is achieved in certain regimes. In general, our results suggest that, in case of networks with multiple servers, type of network topology can be exploited to reduce service delay. 

\let\thefootnote\relax\footnotetext{This research was in part supported by a grant from IPM, and by a grant from Iran National Science Foundation under Grant 92017806.}
\end{abstract}

\section{Introduction}\label{Sec_Intro}
Unprecedented growth in transmit data volumes throughout the networks in recent years demands more efficient use of storage devices while providing high quality of service (QoS) to the users. Currently, large files are stored on servers and users' requests are stored in queues waiting to get service from them. Naturally, one approach to reduce congestion in such networks is to increase the service rate of such servers. However, this will put additional burden on such nodes. As the cost of storage devices has decreased over the years, another viable option is to provide geographical content replication in the network through use of the so-called low-capacity \emph{caching} nodes. The idea of using such nodes for data replication and providing easier local access to data is already covered in the literature (see for example \cite{Kangasharju, Baev_2008, Dowdy_2012, Podlipnig, Borst_2010, Gitzenis_2013}).
 
Recently, in their seminal work, Maddah-Ali and Niesen considered a single server network and have shown that through a two-phase cache placement and content delivery strategy, server load can be reduced inversely proportional to the total size of cache introduced in the network. In fact, in the cache placement phase, contents are stored on caches without knowing the actual demands of the users and in content delivery phase the server transmits packets to fulfill the demands. The fact that such \emph{global caching gain} can be achieved in such network is surprising as  the demands are not known apriori at the cache placement phase.

Maddah-Ali and Niesen's cache placement strategy is based on shattering each file into many pieces and only distributing them throughout the caching nodes \emph{without} replication. It should be noted that such approach is in contrast to the conventional local cache placement strategies where a file or a single piece of it is replicated in caches. The astounding feature of their strategy is that transmission of a single packet at content delivery phase can then simultaneously serve several users. Imagine that two pieces of two files are stored at two different caches and each of them requires the piece available at the other. A single packet containing the sum of two packets can be sent to fulfill both users' demands.  They have shown that their strategy is 12-approximation of the optimal strategy.

The network considered in \cite{Maddah-Ali_Fundamental_2014} is a simple broadcast network where a packet transmitted by the server arrives unaltered at all users. A fundamental problem is to see how network topology affects the optimal coding strategy through both placement and delivery phases.

One of the simplest topologies is the tree network. In \cite{Maddah-Ali_Decentralized_2014}, Maddah-Ali and Niesen proved that their original strategy can be used directly for such a network and what is needed to achieve 12-approximation of the optimal strategy is a simple topology-aware routing strategy at the internal nodes; An internal node routes a packet on its output port if the packet is useful for at least one of the port's children. 

While the topology-aware routing scheme for tree networks is shown to be an order-optimal solution, real-world topologies are much more sophisticated than the simple tree structure. In this paper, we characterize the effect of network topologies on code design and performance analysis of coded caching in a more general setup. In particular, we investigate a multi-server network topology where a set of servers are connected to the clients through an error-free and delay-free intermediate network of nodes (see Fig. \ref{Fig_Multi_Server_Model}). We assume that each node in the intermediate network can perform any causal processing on its input data, to generate its outgoing data. This can consist of simple routing or more sophisticated network coding schemes.

The objective considered in \cite{Maddah-Ali_Fundamental_2014} is minimizing the traffic load imposed to the single server. However, in general, other objectives may be of higher importance when designing network operation strategy. One such key criterion is the \emph{service delay} of the network which is specially critical in content delivery networks (see eg. \cite{Chen_2002,Vakali_2003,Niesen_Delay_2014}). We define the service delay of the network as the total time required to serve any given set of the clients' requests for a specific strategy. We distinguish between two types of delay, \emph{Network Delay} $T_N$ and \emph{Coding Delay} $T_C$, where the total service delay, $T$, is given by $T=T_N+T_C$. To be more precise, $T_N$ is the time it takes for packets to be routed through the network and arrive at their requesting nodes. Naturally, $T_N$ mainly captures the links and queues delays in the network which are intrinsic characteristics of the network. On the other hand, $T_C$ captures the transmission block length required to serve all the users for a specific coding strategy. In this paper, we focus on the coding delay and design strategies to minimize such delays.

We consider three classes of networks: 1- dedicated networks, 2- flexible networks, and 3- linear networks. These networks are characterized based on the richness of their internal connections, as shown in Fig. \ref{Fig_Model_Example}. In each class, an important network topology aspect is the number of servers connected to the network, and their points of contact. In dedicated networks, we can dedicate each server to serve a fixed subset of clients, where each server can send a common message to its corresponding subset, interference-free from other servers. Although in dedicated networks the assignment of clients to the servers is fixed, in flexible networks the network topology is rich-enough to let us adapt these assignments during network operation. Finally, in linear networks we assume random linear network coding operations at the internal nodes. Consequently, in linear networks, the network input-output relation is characterized by a random matrix. As we show in this paper, in order to minimize the coding delay, designing the coding strategy for each class should carefully utilize the flexibility of that class. As will be shown subsequently, there exist coding strategies outperforming that of \cite{Maddah-Ali_Decentralized_2014} for all of the three classes of  networks. Interestingly, we obtain an order optimal solution for the flexible networks. 

Finally, let us review some notations used in this paper. We use lower case bold-face symbols to represent vectors, and upper case bold-face symbols to represent matrices. For any matrix $\mathbf{A}$, $\mathbf{A}^t$ denotes the transpose of $\mathbf{A}$ and for any vector $\mathbf{a}$, $\mathbf{a}^{\perp}$ shows that the condition $\mathbf{a}.\mathbf{a}^{\perp}=0$ is satisfied. For any two sets $S_1$ and $S_2$, the set $S_1 \backslash S_2$ consists of those elements of $S_1$ not present in $S_2$. Also we define $[K]=\{1,\dots,K\}$ and $\mathbb{N}$ to be the set of integer numbers. Moreover, $\mathbb{F}_q$ shows a finite field with $q$ elements, and $\mathbb{F}_q^{a \times b}$ denotes the set of all $a$-by-$b$  matrices whose elements belong to $\mathbb{F}_q$. Finally, let $x_1,\dots,x_m \in \mathbb{F}_q$, then $L(x_1,\dots,x_m)$ is a random linear combination of $x_1,\dots,x_m$ where the random coefficients are uniformly chosen from $\mathbb{F}_q$. 

The rest of the paper is organized as follows. In Section \ref{Sec_Model}, we describe the network model and different classes of networks. In Section \ref{Sec_Main_Results}, we review the main results of the paper, present some examples, and discuss their implications. The next two sections, i.e. Sections \ref{Sec_Flexible} and \ref{Sec_Linear}, present the details of the coding strategies proposed for flexible and linear networks, respectively. Finally, we conclude the paper in Section \ref{Sec_Conclusions}.

\section{Model and Assumptions}\label{Sec_Model}

\begin{figure}
\begin{center}
\includegraphics[width=0.4\textwidth]{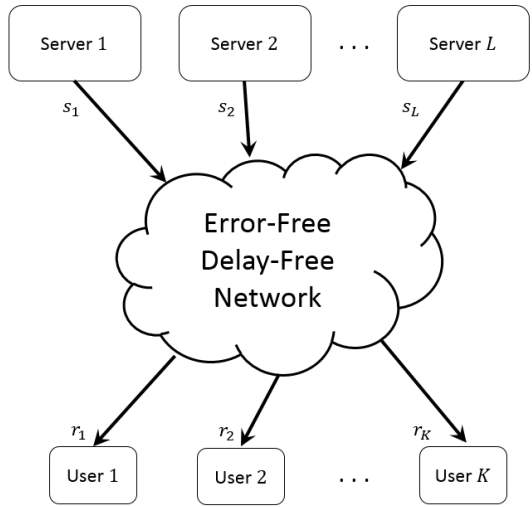}
\end{center}
\caption{Network Model. \label{Fig_Multi_Server_Model}}
\end{figure}

Consider $L$ servers connected to $K$ users through a network. By network we mean a \textit{Directed Acyclic Graph} (DAG) $\mathcal{G}=(V,E)$, in which the set of vertices $V$ consists of internal nodes, and every edge $e\in E$ on the graph represents an \textit{error-free} and \textit{delay-free} link with capacity of one symbol per channel use. Each server and each user is connected to the network by a single link with capacity of one symbol per channel use. At each channel use each node inside the network sends symbols on its output links based on (deterministic/random) functions  of the symbols on its input links, without introducing any delay, where functions corresponding to different output ports need not be the same. Also, we assume that there is no inter-link interference. Data is represented by $m$-bit \textit{symbol}s which are members of a finite field $\mathbb{F}_{q}$, where $q=2^m$.  

Consider a library of $N$ files $\{W_1,\dots,W_N\}$ each of $F$ bits is available to all servers. Each user is also assumed to have a cache of size $MF$ bits. During its operation, the network experiences two different traffic conditions, namely \emph{low-peak} and \emph{high-peak} leading to different network transmission costs for the two conditions. Based on the given traffic condition, the network operates in two distinctive phases. The first phase that is performed during low-peak condition is called the \emph{cache content placement} phase at which servers send data to the users  without knowing the actual requests of the users. This data is cached at the users with the size constraint of $MF$ bits and is stored to be used in the future. In the second phase that is performed during high-peak network condition, each user requests one of the files (demand $d_k$ of user $k$ denotes requesting file $W_{d_k}$), and according to these requests the servers send proper packets over the network. Subsequently, upon receipt of packets over the network, users try to decode their requested files with the help of their own cache contents. Assuming that the cache placement transmission delay during the low-peak condition puts no constraint on overall network performance, the goal is to design the cache placement strategy such that the service delay at the time of \emph{content delivery} is minimized.

Channel uses in the network are indexed by time slots $t=1,2,\dots$. At time slot $t$, servers transmit symbols $s_1(t),\dots,s_L(t)$ and users receive symbols $r_1(t), \dots, r_K(t)$ without delay. We consider the most general case, i.e.,
\begin{eqnarray}
\nonumber r_k(t)=f_k(s_1(t),\dots,s_L(t)), k=1,\dots,K,
\end{eqnarray}
in which we have assumed that the network is memory-less across time slots. Functions $f_k(.)$ depend on the topology of the network and the local operations of the nodes inside the network.

We define:
\begin{eqnarray}
\nonumber
\mathbf{s}(t) \triangleq \left( \begin{array}{c} s_1(t) \\ \vdots \\ s_L(t)   \end{array} \right), 
\mathbf{r}(t) \triangleq \left( \begin{array}{c} r_1(t) \\ \vdots \\ r_K(t)   \end{array} \right),                       
\end{eqnarray}
where $\mathbf{s} \in  \mathbb{F}_{q}^{L \times 1}$, and  $\mathbf{r} \in  \mathbb{F}_{q}^{K \times 1}$.

In the first phase, users store data from the servers without knowing the actual requests. The only concern in the first phase is respecting the memory constraint of each user. However, in the second phase, we focus on the time needed to deliver the requested files to the users. The second phase consists of $T_C$ time slots (channel uses). In other words, during the second phase, servers sequentially transmit $\mathbf{s}(1), \mathbf{s}(2), \dots, \mathbf{s}(T_C)$, and the users receive $\mathbf{r}(1), \mathbf{r}(2), \dots, \mathbf{r}(T_C)$. Consequently, $T_C(d_1,\dots,d_K)$ is the number of times slots required to satisfy demands $d_1,\dots,d_k$. Then, we define the optimum \emph{Coding Delay} as:
\begin{eqnarray}
D^*=\min {\max_{d_1,\dots,d_k}{T_C(d_1,\dots,d_K)}}, 
\end{eqnarray}
where the minimization is over all strategies. In this paper, we are interested in characterizing $D^*$ for a network, given its specific topology.

For a given network topology, the network input-output relation depends on operational design of internal nodes. As we will show, the \emph{richer} the network topology is, the broader the design space will be. Therefore, we consider the following three classes of network topologies:

\begin{figure}
\begin{center}
\includegraphics[width=1\textwidth]{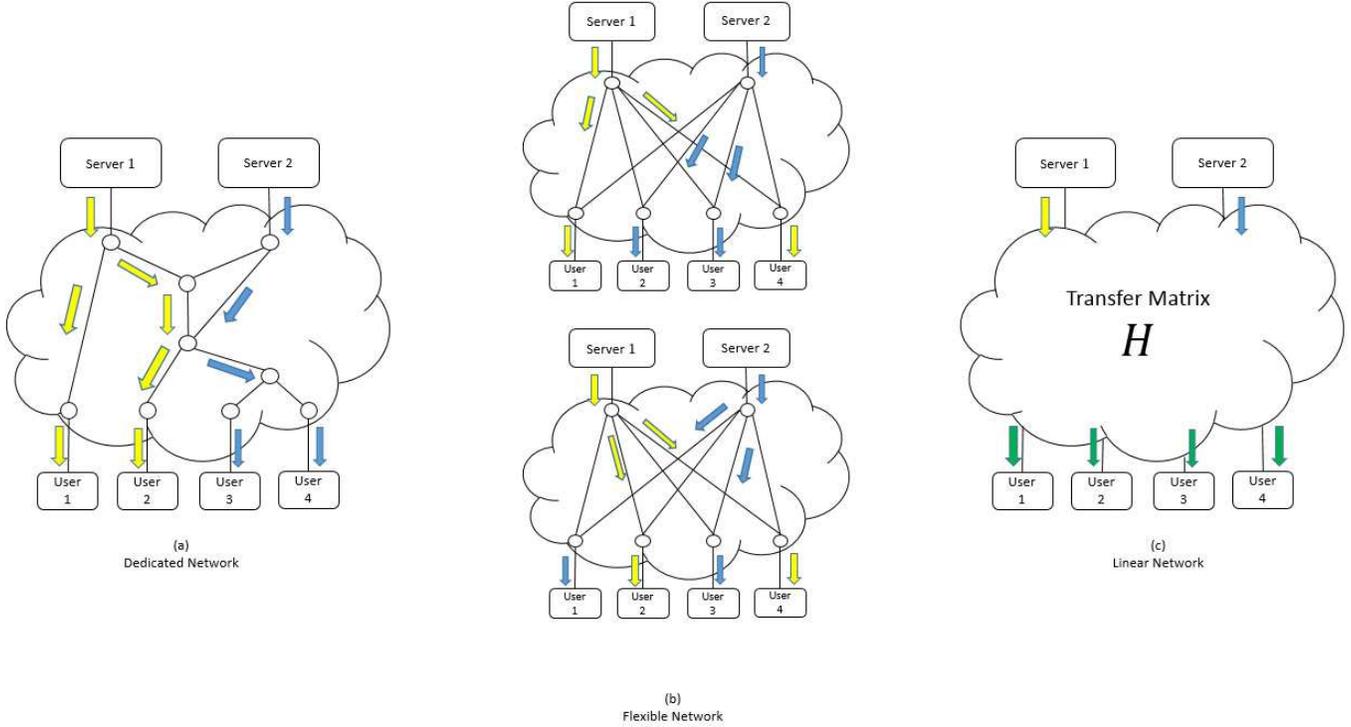}
\end{center}
\caption{Examples for dedicated, flexible, and linear networks.\label{Fig_Model_Example}}
\end{figure}

\begin{itemize}

\item \textbf{Dedicated Networks}

In this class of networks, each packet transmitted by a server is routed to a fixed subset of the users. In other words, we can dedicate each server to a fixed subset of users, and this server can send packets to these users, concurrently and without interference to other servers. We assume these subsets to be non-overlapping so that each user is assigned to a single server. Also, we assume we can balance these assignments such that the number of users assigned to a server is almost the same for all servers. If network topology allows us to perform such assignments, we call the network a \emph{Dedicated Network}.

More formally, there exists a coding (in this case just routing would suffice) strategy at the network nodes such that there \textit{exists a partitioning} $\{P_1,\dots,P_L\}$ of $[K']=\{1,2, \dots,K'\}$ where
\begin{eqnarray}
 \nonumber & &|P_l|=\frac{K'}{L}, l=1,\dots,L  
\\ & &\forall k=1,\dots,K, \hspace{2mm} \mathrm{if} \hspace{2mm} k \in P_l, \hspace{2mm} \mathrm{then}  \hspace{2mm} f_k(s_1,\dots,s_L)=s_l,
\end{eqnarray}
in which $K'$ is the smallest number larger than or equal to $K$ which is divisible by $L$.

Consider Fig. \ref{Fig_Model_Example}-(a) in which $L=2$ servers are connected to $K=4$ users via a dedicated network. In this example, we have $K'=K$, and it is easy to verify that we can find a routing strategy at intermediate nodes such that:
\begin{eqnarray}
\nonumber
	&&P_1=\{1, 2\}, P_2=\{3, 4\} \\ \nonumber
	&&f_1(s_1,s_2)=f_2(s_1,s_2)=s_1,  \\ \nonumber
	&&f_3(s_1,s_2)=f_4(s_1,s_2)=s_2.
\end{eqnarray}

\item \textbf{Flexible  Networks}

In this class of networks, we assume that there exists a coding (routing) strategy at network nodes such that for \textit{every partitioning} $\{P_1,\dots,P_L\}$ of $[K]=\{1,2, \dots,K\}$ we have:
\begin{eqnarray}
\forall k=1,\dots,K, \hspace{2mm} \mathrm{if} \hspace{2mm} k \in P_l, \hspace{2mm} \mathrm{then}  \hspace{2mm} f_k(s_1,\dots,s_L)=s_l.
\end{eqnarray}

It should be noted that in the dedicated networks, each server was assigned to a \emph{fixed} subset of users, while in flexible networks we can flexibly change these assignments during the data delivery phase. In the example shown in Fig. \ref{Fig_Model_Example}-(b), we have chosen two sample partitionings, i.e. $P_1=\{1,4\}, P_2=\{2,3\}$ for the top figure, and $P_1=\{2,4\}, P_2=\{1,3\}$ for the bottom figure. It is obvious that every flexible network is a dedicated network, but the converse is not true. Hence, flexible networks are generally  \emph{richer} than dedicated networks in terms of their internal connectivity.

\item \textbf{Linear Networks}
	
In the aforementioned dedicated and flexible networks, the intermediate nodes should know the topology of the network in order to do a proper routing of their input data onto their output ports. However, in the case of linear networks, we assume that such knowledge is not available at intermediate nodes. Thus, we assume that each node generates a random linear combination of data at its input ports to be transmitted on its output ports. Consequently, the overall transmit and receive vectors of the network are linearly related at each time slot:
\begin{equation}
\mathbf{r}(t) = \mathbf{H} \mathbf{s}(t),
\end{equation}
where $\mathbf{H} \in  \mathbb{F}_{q}^{K \times L}$. $\mathbf{H}$ is called the \textit{Network Transfer Matrix} (NTM). Let us define:
\begin{eqnarray}
\nonumber \mathbf{X}&\triangleq&[\mathbf{s}(1), \mathbf{s}(2), \dots, \mathbf{s}(T_C)],  
\\ \mathbf{Y}&\triangleq&[\mathbf{r}(1), \mathbf{r}(2), \dots, \mathbf{r}(T_C)].
\end{eqnarray}
We call matrices $\mathbf{X} \in \mathbb{F}_{q}^{L \times T_C}$ and $\mathbf{Y} \in \mathbb{F}_{q}^{K \times T_C}$, transmit and receive blocks, respectively. Then, transmit and receive blocks are also linearly related:
\begin{equation}
	\mathbf{Y}=\mathbf{H}\mathbf{X}.
\end{equation}

In \emph{Linear Networks}, we assume that network topology is \emph{rich-enough} to guarantee that the elements of $\mathbf{H}$ are i.i.d. random variables. Similar to most existing papers employing random linear network coding, we assume large-enough $q=2^m$ to assure that NTM exhibits full rank matrix properties, with high probability \cite{Yang_2011, Li_2003}. Also, we assume uniform distribution on the elements of $\mathbf{H}$, which is a proper assumption for large scale networks with many sources of randomness \cite{Mahdi_2008, Mahdi_2011, Silva_2010, Yang_2010}. 

Finally, for later reference, define $\mathbf{h}_k$ as
\begin{eqnarray}\label{Eq_Model_Linear_Channel_Vector_Def}
\mathbf{h}_k \triangleq [h_{k,1},\dots,h_{k,L}]^t, k=1,\dots,K.
\end{eqnarray}

It should be noted that we assume a static network transfer matrix $\mathbf{H}$, such that it does not change for the duration of $T_C$ time slots. As changes in the network transfer matrix is due to topology changes (e.g. failure of a node), such assumption is valid in most practical scenarios. Fig. \ref{Fig_Model_Example}-(c) illustrates an example of a linear network in the case of $L=2$ and $K=4$. 

\end{itemize}

Finally, it should be noted that in this paper, we assume $N \geq K$. Such assumption will lead to more clear presentation in the rest of this paper and will also exclude the possibility of using uncoded multi-casting schemes that may trivially be adopted for the case of small number of files. Extending the results to the case $N <K$ is straightforward, and the readers are referred to \cite{Maddah-Ali_Fundamental_2014}.

\begin{myremark}
	It should be noted that if a server is connected to the network by a number of links (each of integer capacity) with the total capacity of $t$ symbols per time slot, our model can accommodate this scenario by splitting this server into $t$ separate servers.
\end{myremark}

\begin{myremark}
The random linear network coding approach at intermediate nodes is also used in other papers such as \cite{Das_2010}, \cite{Meng_2012}, and \cite{Meng_2013}, in the context of uni-casting via interference alignment.
\end{myremark}

\section{Main Results: Review and Discussion}\label{Sec_Main_Results}

\begin{figure}
\begin{center}
\includegraphics[width=0.4\textwidth]{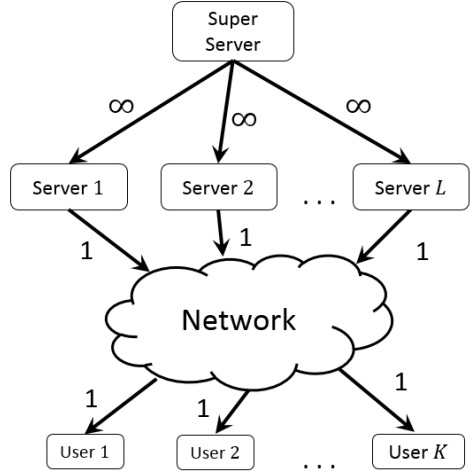}
\end{center}
\caption{The Super-Server Strategy.\label{Fig_Main_Result_1}}
\end{figure}

The simplest approach in designing a coding scheme for the multi-server case is to directly transform it to a single-server scenario and use the scheme presented in \cite{Maddah-Ali_Fundamental_2014}. Such approach can be simply adopted by adding a \emph{Super Server} node and connecting it with edges of infinite capacity to all other servers (see Fig. \ref{Fig_Main_Result_1}). As shown in \cite{Maddah-Ali_Decentralized_2014}, we only need to route packets that are transmitted by the super-server to those users that can benefit from receiving them. For tree networks, such approach results in the following simple topology-aware routing scheme: at each interior node, the packets received at the input port benefiting at least one of the descendants of the node, is sent on the corresponding  output port. As proved in \cite{Maddah-Ali_Decentralized_2014}, the minimum traffic load imposed on each link, in the scaling sense, can be achieved by such simple routing scheme. Such approach also leads to an order-optimal coding delay for tree networks under our formulation.   

One can, however, think of another naive and simple approach to the multi-server problem. We can simply dedicate each server to a subset of users and make it responsible for satisfying the requests of the corresponding subset of users. It is clear that, in order to prevent congestion at a specific server, we should balance out loads of the servers so that each of $L$ servers will be responsible for about $K/L$ users. Consequently, one can easily arrive at the following theorem for the coding delay in dedicated networks:

\begin{thm}\label{Thm_dedicated}
The coding delay for a dedicated network is upper bounded by a piecewise-linear curve with corner points
\begin{equation}\label{Eq_Dedicated_Theorem}
	D^*(M) \leq \frac{K' \left( 1-\frac{M}{N} \right)}{\min{\left(K',L+K'\frac{M}{N}\right)}} \frac{F}{m},
\end{equation}
where $\frac{K'M}{LN} \in \mathbb{N}$ should be satisfied, and $K'$ is the smallest number larger than or equal to $K$ which is divisible by $L$.
\end{thm}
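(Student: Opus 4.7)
The plan is to reduce the problem to $L$ independent instances of the single-server coded caching scheme of Maddah-Ali and Niesen \cite{Maddah-Ali_Fundamental_2014}, each run in parallel on disjoint server-user pairs. First, if $K$ is not divisible by $L$, I would augment the user set with $K'-K$ dummy users whose demands are arbitrary; these do not hurt the worst-case delay because they only share bandwidth with the real users inside their own partition class. Invoking the dedicated-network property from Section \ref{Sec_Model}, I would fix a balanced partition $\{P_1,\dots,P_L\}$ of $[K']$ with $|P_l|=K'/L$ and the associated routing strategy so that every symbol emitted by server $l$ arrives unaltered at exactly the users in $P_l$ and at no others. This turns the delivery phase into $L$ non-interfering, broadcast-like subsystems.

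Next, I would apply the MAN placement and delivery scheme to each pair $(l, P_l)$ as an isolated single-server system with $K'/L$ users, a library of $N$ files, and per-user cache of $MF$ bits. Setting $s \triangleq K'M/(LN)\in\mathbb{N}$, which is precisely the MAN integrality condition $(K'/L)\cdot(M/N)\in\mathbb{N}$, the scheme of \cite{Maddah-Ali_Fundamental_2014} delivers at rate $(K'/L-s)/(s+1)$ file-equivalents per server. Because the $L$ subsystems run simultaneously without interference, the overall coding delay equals the delay of a single subsystem, namely
\begin{equation}
\nonumber
T_C \;=\; \frac{K'/L - s}{s+1}\cdot\frac{F}{m} \;=\; \frac{K'(1-M/N)}{L+K'M/N}\cdot\frac{F}{m}.
\end{equation}

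Finally I would match this expression to the form in the theorem. For every integer $s\le K'/L-1$ one has $L(s+1)\le K'$, so $\min(K',\,L+K'M/N)=L+K'M/N$ and the MAN delay is recovered verbatim. The only remaining corner is $s=K'/L$ (that is, $M=N$), where the numerator $K'(1-M/N)$ vanishes, so clipping the denominator to $K'$ merely keeps the expression well-defined without altering the value; connecting the $K'/L+1$ corner points by time-sharing produces the piecewise-linear envelope claimed in the theorem. There is no substantial obstacle: the only point requiring care is the first step, where one must appeal to the dedicated-network definition to argue that the $L$ subsystems really are interference-free and can be executed in parallel; once that is in place, the rest is a direct invocation of the MAN achievability bound together with elementary algebra.
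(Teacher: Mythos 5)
Your proposal is correct and follows essentially the same route as the paper: add virtual users to reach $K'$, use the dedicated-network partition to run $L$ interference-free single-server Maddah-Ali--Niesen subsystems with $K'/L$ users each in parallel, and read off the delay $\frac{K'(1-M/N)}{L+K'M/N}\frac{F}{m}$ at the corner points $\frac{K'M}{LN}\in\mathbb{N}$. Your added observation that the $\min(K',\cdot)$ clipping is non-binding at every corner point except the trivial $M=N$ one is consistent with the paper's (looser) justification that the multicast gain can never exceed $K$.
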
  

The proof of Theorem \ref{Thm_dedicated} is straightforward, and thus, we just draw the main sketch here. First, let us review the main concept behind the coded caching scheme for a single server in a broadcast scenario \cite{Maddah-Ali_Fundamental_2014}. In this case, if we do not have any cache at the users, it is clear that the server should in sequence send all the requested files to the users (considering that the users request different files). This will lead to a total amount of $KF$ bits  to be transmitted. Since the server is only able to transmit $m$ bits (a symbol in $\mathbb{F}_q$) at each time slot, the coding delay will be $K\frac{F}{m}$ time slots. By providing cache at the users, the \emph{local caching gain} will reduce the coding delay to $K(1-\frac{M}{N})\frac{F}{m}$. The main result in \cite{Maddah-Ali_Fundamental_2014} indicates that by exploiting the additional \emph{global caching gain}, the coding delay for $KM/N \in \mathbb{N}$ reduces to:
\begin{equation}\label{Eq_Single_Server_Coding_Delay}
T_C=\frac{K(1-M/N)}{1+KM/N}\frac{F}{m},
\end{equation}
which is order optimal for this scenario.

As we extend to the multi-server case, let us assume for simplicity that $K$ is divisible by $L$. Splitting the original $L$-server problem with $K$ users into $L$ single-server problems with $\frac{K}{L}$ users is possible in this case. Since the sub-networks may operate in parallel, the delay is further reduced to:
\begin{eqnarray}
\nonumber	T_C&=&\frac{\frac{K}{L}\left(1-\frac{M}{N}\right)}{1+\frac{K}{L}\frac{M}{N}} \frac{F}{m}\\ \nonumber
&=& \frac{K(1-M/N)}{L+KM/N}\frac{F}{m},
\end{eqnarray}
where $KM/LN \in \mathbb{N}$. Since in any scheme we can benefit at most all the $K$ users simultaneously, the total multi-casting gain of any scheme is at most $K$, and the denominator should be compared to $K$ (by the $\min$ operator in the denominator of (\ref{Eq_Dedicated_Theorem})). Extension to the case where $K$ is not divisible by $L$ can be accomplished by adding virtual users. The following example compares the above two naive approaches:

\begin{examp}\label{Examp_Intro_1}

Consider the network shown in Fig. \ref{Fig_Intro_Examp1} for $K=4$ users. We also assume the library contains $N=4$ files, and each user can store $M=2$ files during the cache content placement phase. By adding a super server a tree network is obtained, and in the delivery phase, the scheme in \cite{Maddah-Ali_Decentralized_2014} suggests to send 
\begin{eqnarray}
\nonumber	R_1&=&\frac{K(1-M/N)}{1+KM/N}F \\ \nonumber
&=&\frac{4\left(1-\frac{2}{4}\right)}{1+\frac{4\times 2}{4}} F \\ \nonumber
&=&\frac{2}{3} F,
\end{eqnarray}
bits at the super server's output. In their scheme, at each node only those packets benefiting the descendants of an output port will be copied on that port.  However, in our case each packet benefits $1+\frac{KM}{N}=3$ users, and thus should be copied on both output ports of node $n_1$. This results in:
\begin{eqnarray}
\nonumber 	R_2=R_1, 
\end{eqnarray}
and since we assumed a capacity of one symbol per time slot for each internal edge, the delay of this scheme is:
\begin{eqnarray}
	T_C=\frac{R_2}{m} = \frac{2}{3} \frac{F}{m}.
\end{eqnarray}

At this stage, the key question is whether it is possible to further reduce the required number of time slots or not? In fact, with a closer look at this network it becomes evident that we can reduce this network to a dedicated network with:
\begin{eqnarray}
\nonumber	P_1&=&\{1,2\} \\ \nonumber
	P_2&=&\{3,4\}.
\end{eqnarray}
Therefore, the original problem can be divided into two sub-problems (see Fig. \ref{Fig_Intro_Examp1}) and each server can address the load of its corresponding sub-network by:
\begin{eqnarray}
\nonumber	R_3&=&\frac{\frac{K}{L}(1-M/N)}{1+\frac{K}{L}\frac{M}{N}}F \\ \nonumber
	&=& \frac{F}{2}.
\end{eqnarray}
Since the sub-networks operate in parallel, the delay of this scheme will be
\begin{eqnarray}
	T_C=\frac{R_3}{m}=\frac{1}{2}\frac{F}{m}
\end{eqnarray}
time slots.
\begin{figure}
\begin{center}
\includegraphics[width=0.8\textwidth]{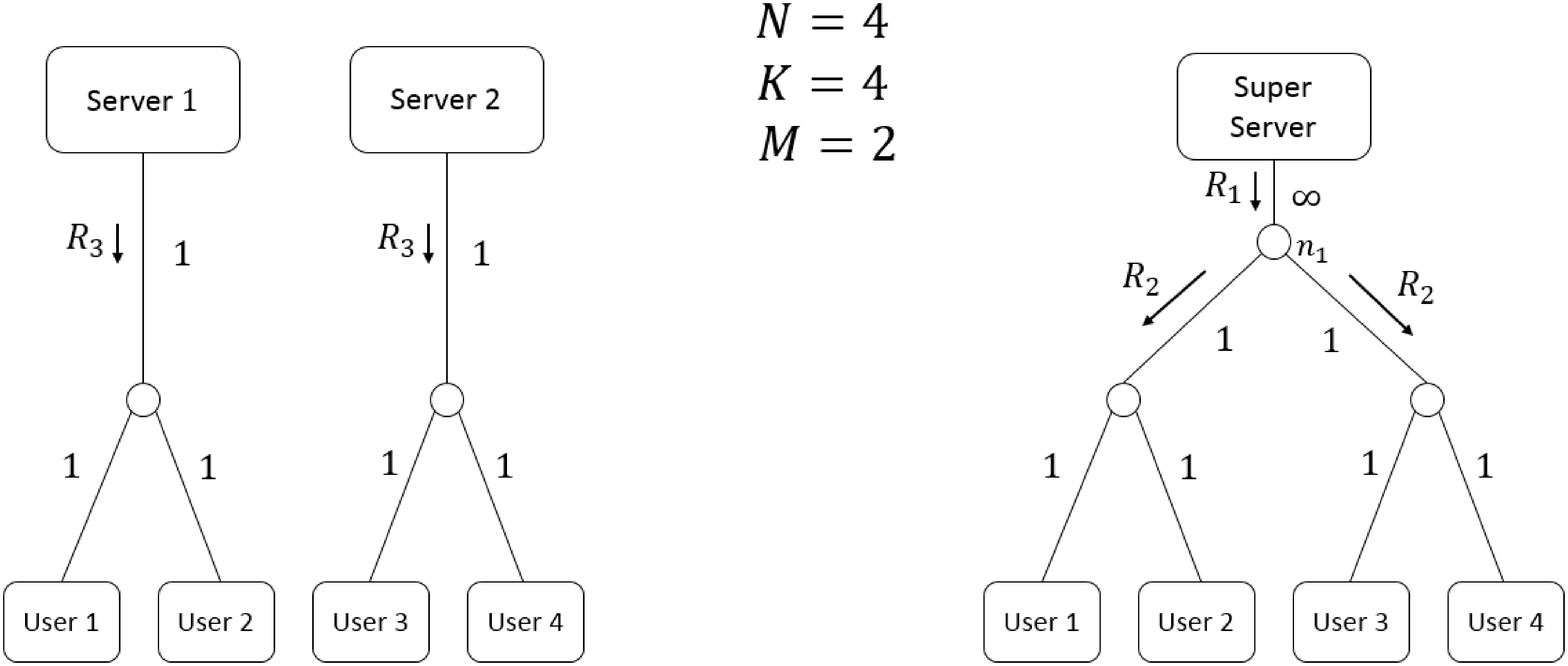}
\end{center}
\caption{Example \ref{Examp_Intro_1}. \label{Fig_Intro_Examp1}}
\end{figure}
\end{examp}

The above example shows that although the scheme in \cite{Maddah-Ali_Decentralized_2014} is order-optimal for tree networks, however, by designing a topology-aware scheme  it may be possible to  arrive at a better pre-constant factor.

Next, let us consider another class of networks with more flexibility, i.e. Flexible Networks. In such networks, similar to dedicated networks, we can assign a subset of users to each server, and the network allows parallel operation of the servers. However, unlike dedicated networks, such assignment can be changed arbitrarily in subsequent transmissions. Such extra freedom in user assignments allows a significant reduction in the coding delay as shown in the following example. 

\begin{examp}[$L=2,K=4,N=4,M=1$]\label{Examp_Flexible_1}
For a single server case, the scheme proposed in \cite{Maddah-Ali_Fundamental_2014} achieves the following delay for $M=1$:
\begin{eqnarray}
\nonumber T_C=\frac{K(1-M/N)}{1+KM/N}\frac{F}{m}=\frac{3}{2} \frac{F}{m}.
\end{eqnarray}
In  order to get a better insight on this result, consider Fig. \ref{Fig_Flexible_Exmp1}-(a) which shows the cache content placement and the delivery scheme for requests $A,B,C,D$ by users $1,2,3,4$, respectively. In the cache content placement phase, each file is divided into four equal-sized parts and cached as shown in Fig. \ref{Fig_Flexible_Exmp1}-(a). In the delivery phase, the single server sends the following data in sequence:
\begin{eqnarray}
\nonumber A_2+B_1, A_3+C_1, A_4+D_1, B_3+C_2, B_4+D_2, C_4+D_3.
\end{eqnarray} 
As a result, six transmissions are required while each has the delay $\frac{1}{4}\frac{F}{m}$. Thus, the total delay will be $T_C=\frac{6}{4}\frac{F}{m}=\frac{3}{2}\frac{F}{m}$. In the above scheme, each transmission benefits a pair of users, and is of no value for the other pair. 

If we have two servers, by the definition of flexible networks each server is able to transmit a given data to a pair of users simultaneously and interference-free from transmission of the other server. In Fig. \ref{Fig_Flexible_Exmp1}-(b), transmissions of the left and right servers are colored as blue and red, respectively.  Thus, a pair of transmissions  in Fig.  \ref{Fig_Flexible_Exmp1}-(a) can be sent simultaneously as shown in Fig.  \ref{Fig_Flexible_Exmp1}-(b), resulting in the achievable pair $(M,T_C)=(1,\frac{3}{4}\frac{F}{m})$. Thus, exploiting the extra flexibility of the network in this example results in the coding delay enhancement, compared with the single-server case.

\end{examp}

\begin{figure}
\begin{center}
\includegraphics[width=0.8\textwidth]{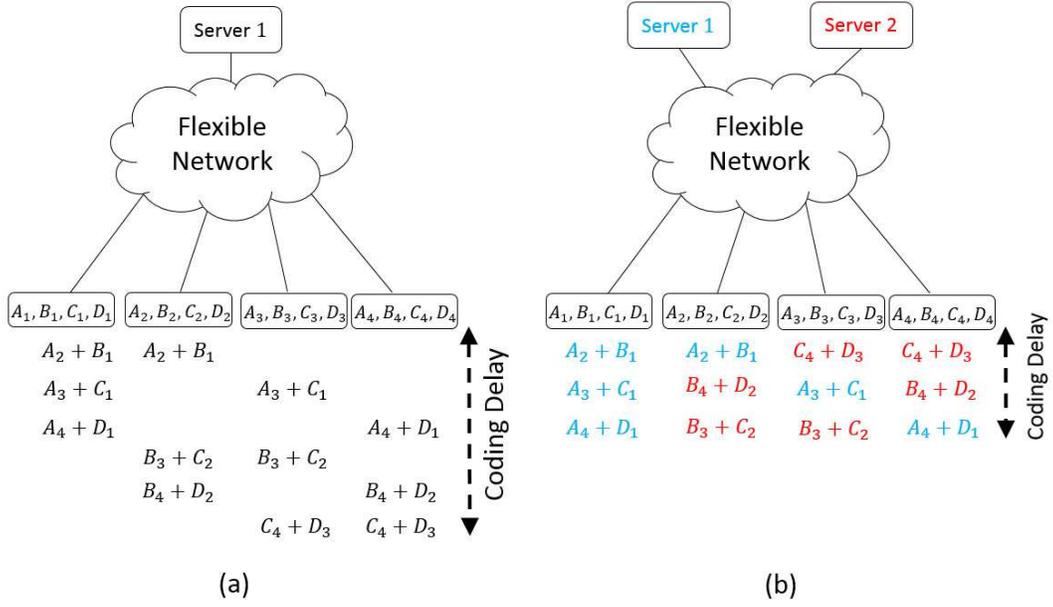}
\end{center}
\caption{Flexible Network Example \ref{Examp_Flexible_1}. \label{Fig_Flexible_Exmp1}}
\end{figure}

In dedicated networks, we exploit the network topology to assign a fixed number of users to each server. In this way, a user receives packets only from a certain server and this assignment is fixed during the course of transmission. In flexible networks, however, at different time slots users can be served by different servers where the assignment strategy is fixed for each server. Fig. \ref{Fig_Strategy} shows two servers connected to three users through such flexible network. The blue packets originating from server 1 are intended for one user (which may change at different time slots) and the red packets originating from server 2 are intended for two users (which may change at different time slots). We assign blue packets to be associated with Strategy $1$ and red packets with Strategy $2$. Fig. \ref{Fig_Strategy} shows consequent transmissions in such network where Strategy 1 is associated with server 1 and Strategy 2 with server 2. In general, we associate Strategy $p$ to a packet if it is intended for $p$ users. Now, if we fix a strategy for a server, it means that all the packets transmitted by that server have the same strategy. It is worth mentioning that packets received by a user do not necessarily have the same strategy, since they may have arrived from different servers (see Fig. \ref{Fig_Strategy}).  

\begin{figure}
\begin{center}
\includegraphics[width=0.8\textwidth]{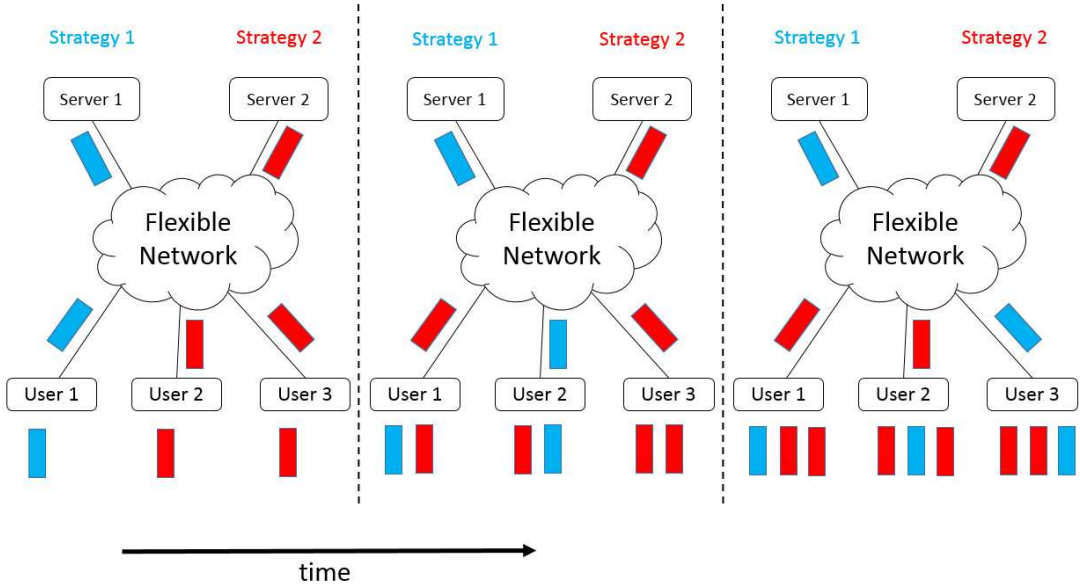}
\end{center}
\caption{Server 1 and blue packets are associated with Strategy 1, and Server 2 and red packets are associated with Strategy 2. \label{Fig_Strategy}}
\end{figure}

Consider server $i$ with Strategy $p_i$. Also, we assign a fraction $F_i$ bits of each file to be delivered by Server $i$. In order to employ the scheme in \cite{Maddah-Ali_Fundamental_2014} for this server, we allocate  a memory of size $\bar{M_i}$ bits from all the users to be used only by Server $i$ where    
\begin{eqnarray}
\nonumber \bar{M_i}=\frac{N}{K}\left(p_i-1\right)F_i.
\end{eqnarray}
Therefore, Server $i$ can deliver $F_i$ bits to all the users in $T_C(i)$ time slots where
\begin{eqnarray}
\nonumber T_C(i)&=&\frac{K\left(1-\frac{\bar{M_i}/F_i}{N}\right)}{1+\frac{K\bar{M_i}/F_i}{N}}\frac{F_i}{m} \\
&=&\frac{K-p_i+1}{p_i}\frac{F_i}{m}.
\end{eqnarray}
We assume that a routing strategy exists where packets from different servers do not interfere with each other. In this case, the total delay is limited by the maximum delay of the servers. Therefore, in order to balance out the servers' loads, we can simply set:
\begin{eqnarray}
\nonumber F_i=\alpha \frac{p_i}{K-p_i+1}F,
\end{eqnarray}
where $\alpha$ does not depend on $i$ and satisfies:
\begin{eqnarray}
\nonumber 	\sum_{i=1}^{L}{F_i}=\alpha \sum_{i=1}^{L}{\frac{p_i}{K-p_i+1}}F=F.
\end{eqnarray}
Therefore, 
\begin{eqnarray}
	\alpha=1/\sum_{i=1}^{L}{\frac{p_i}{K-p_i+1}}.
\end{eqnarray}
Since the total memory is $M$, we have
\begin{eqnarray}
\nonumber	M=\sum_{i=1}^{L} {\bar{M_i}} /F &=& \frac{N}{KF} \sum_{i=1}^{L}{\left(p_i-1\right)F_i} \\ \nonumber &=& \frac{N}{K} \sum_{i=1}^{L}{\left(p_i-1\right)\alpha \frac{p_i}{K-p_i+1}} \\ 
&=& \frac{N}{K} \frac{\sum_{i=1}^{L}{\frac{p_i(p_i-1)}{K-p_i+1}}}{\sum_{i=1}^{L}{\frac{p_i}{K-p_i+1}}}.
\end{eqnarray}
Hence,
\begin{eqnarray}
\nonumber 	T_C&=&\alpha \frac{F}{m} \\ 
		&=& \frac{\frac{F}{m}}{\sum_{i=1}^{L}{\frac{p_i}{K-p_i+1}}}.
\end{eqnarray}

The aforementioned result is based on a strong assumption that a routing strategy exists for parallel and interference-free transmission of the packets. In Section \ref{Sec_Flexible}, we show that such a strategy does in fact exist for flexible networks. The preceding discussion is a rough proof of the following Theorem:

\begin{thm}\label{Thm_flexible}
Suppose a flexible network with $L$ servers. Then, for all $Q \in \{0,\dots,K-L\}$ the following $(M,T_C)$ pairs (and the straight lines connecting them) are achievable
\begin{equation}\label{Eq_Main_Resuls_Flexile_Th}
(M,T_C)=\left\{\left(\frac{N}{K}\frac{\sum_1^L{\frac{p_i(p_i-1)}{K-p_i+1}}}{\sum_1^L{\frac{p_i}{K-p_i+1}}},\frac{1}{\sum_1^L{\frac{p_i}{K-p_i+1}}}\frac{F}{m}\right) \mathrm{,\hspace{1 mm} for \hspace{1 mm} all \hspace{5 mm}} p_1+\dots+p_{L}=K-Q \mathrm{,\hspace{1 mm} where \hspace{5 mm}} p_i \geq 2 \right\},
\end{equation}
and thus lead to an upper bound for the optimum coding delay $D^*$. 
\end{thm}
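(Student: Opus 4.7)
My plan is to promote the sketch that precedes the theorem into a full proof by (i) defining the two-phase scheme that instantiates the $L$-way split across servers and (ii) exhibiting a combinatorial schedule that lets the $L$ per-server schemes run in parallel inside the flexible network. Fix strategies $p_1,\dots,p_L$ with $p_1+\cdots+p_L=K-Q$ and $p_i\ge 2$. Setting $F_i=\alpha\,p_iF/(K-p_i+1)$ with $\alpha^{-1}=\sum_i p_i/(K-p_i+1)$ partitions each file $W_n$ into pieces $W_n^{(1)},\dots,W_n^{(L)}$, where $W_n^{(i)}$ has size $F_i$ and is to be delivered by server $i$. Each user reserves a private cache budget of $\bar{M}_i=N(p_i-1)F_i/K$ bits for server $i$, in which server $i$ performs the standard Maddah-Ali--Niesen placement with parameter $t_i=p_i-1$ on its own pieces. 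Summing the $L$ private budgets reproduces exactly the value of $M$ listed in (\ref{Eq_Main_Resuls_Flexile_Th}).

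In the delivery phase, server $i$ is responsible for the usual MAN coded packets $\{X_S^{(i)}: S\subseteq[K],\,|S|=p_i\}$, each of which is useful to precisely the $p_i$ users indexed by $S$. If server $i$ operated alone, it would need $T_C(i)=(K-p_i+1)F_i/(p_im)$ slots, and our choice of $F_i$ makes $T_C(i)=\alpha F/m$ uniform in $i$. The nontrivial claim is that all $L$ servers can transmit concurrently at one symbol per slot. The enabler is the flexible-network property: at each slot, provided the $L$ currently active targets $S_1,\dots,S_L$ are pairwise disjoint, we may extend $\{S_1,\dots,S_L\}$ to a partition $\{P_1,\dots,P_L\}$ of $[K]$ (distributing the $Q\le K-L$ leftover users arbitrarily), invoke the coding/routing strategy guaranteed by the definition of a flexible network on that partition, and thereby deliver each $X_{S_i}^{(i)}$ losslessly to every user in $S_i$ without interference from the other servers; users in $P_l\setminus S_l$ simply discard the extra symbol.

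The remaining obstacle, and the step I expect to be the main technical hurdle, is to actually exhibit such a schedule: order the per-server packet streams so that at every slot the $L$ concurrent target subsets are disjoint and every server completes within $\alpha F/m$ slots. My plan is to first sub-packetize each $X_S^{(i)}$ uniformly so that every atomic transmission occupies the same number of slots, and then construct a conflict-free schedule either (a) by a symmetric round-robin over the orbits of the cyclic group $\mathbb{Z}/K\mathbb{Z}$ acting on $p_i$-subsets of $[K]$, or (b) by a Hall-type / fractional matching argument on the bipartite subset--slot incidence structure, whose required marginal load balance is guaranteed by $\sum_i p_i\le K$; a uniform-random scheduling argument provides a soft version that can be derandomized. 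Once disjointness is ensured, correctness of decoding at each user follows immediately from the correctness of the single-server MAN scheme applied within each private cache region, since distinct servers operate on disjoint file fractions and disjoint cache budgets. The piecewise-linear corner-to-corner interpolation asserted in the theorem then follows by standard memory-sharing between two adjacent strategy vectors, completing the proof.
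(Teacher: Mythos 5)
Your reduction to per-server Maddah-Ali--Niesen subproblems, the choice $F_i=\alpha p_i F/(K-p_i+1)$ and $\bar{M}_i=N(p_i-1)F_i/K$, and the observation that pairwise-disjoint active target sets can be extended to a partition of $[K]$ and then served interference-free by the flexible-network property, all match the paper's argument. However, the step you yourself flag as the main technical hurdle --- exhibiting a schedule in which at every slot the $L$ concurrent target subsets are pairwise disjoint and every server finishes within $\alpha F/m$ slots --- is left as a list of candidate plans (round-robin over orbits of $\mathbb{Z}/K\mathbb{Z}$, a Hall-type/fractional matching argument, random scheduling plus derandomization), none of which is carried out or verified. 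That schedule is the actual content of the achievability proof: without it the delay $\frac{1}{\sum_{i=1}^L p_i/(K-p_i+1)}\frac{F}{m}$ is not established. It is not obvious that cyclic orbits of $p_i$-subsets can be synchronized across servers with different $p_i$, and the inequality $\sum_i p_i\le K$ alone does not supply the exact per-subset load balance a matching argument would need; so as written there is a genuine gap.

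The paper closes exactly this gap with an explicit construction. It introduces a virtual server with $p_{L+1}=Q$ and puts the transmit slots in one-to-one correspondence with all $(p_1,\dots,p_{L+1})$-partitions of $[K]$; in the slot for partition $\{P^1_{\theta_1},\dots,P^{L+1}_{\theta_{L+1}}\}$, server $i$ sends a MAN-type XOR of fresh pico-files to the users in $P^i_{\theta_i}$. Disjointness is automatic because the blocks form a partition, and the load balance is exact by counting: a fixed $p_i$-subset occurs as server $i$'s block in exactly $\gamma_i=\frac{(K-p_i)!\,p_i!}{p_1!\cdots p_{L+1}!}$ of these partitions, so each mini-file is pre-split into $\gamma_i$ equal pico-files of size $xF$ (this is your ``uniform sub-packetization,'' made concrete), and over the whole schedule every $p_i$-subset receives $\gamma_i x F=x_i F$ bits of common message from server $i$ --- precisely what server $i$'s single-server subproblem with parameter $t_i=p_i-1$ and file size $F_i=\alpha_i x_i F$ requires. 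The number of slots is $\frac{K!}{p_1!\cdots p_{L+1}!}$, each of duration $xF/m$ channel uses, which gives the claimed delay. In short, your outline identifies the right decomposition and the right obstacle, but stops short of the combinatorial construction that constitutes the proof.
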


\begin{proof}
See Section \ref{Sec_Flexible} for the proof.
\end{proof}

In the following example, we present a network in which employing the flexible network strategy results will go beyond earlier results and paves the way for  scaling improvement in the coding delay compared with the super-server strategy.

\begin{examp}\label{Examp_Intro_2}
Consider the network depicted in Fig. \ref{Fig_Intro_Examp2}-(a). In this network, $L$ (an even number) servers  are  connected to $K=L^2/2$  users via $L$ intermediate nodes where each intermediate node has dedicated links to all the users. We also assume:
\begin{eqnarray}
\nonumber \frac{M}{N}=\frac{2}{L^2}\left(\frac{L}{2}-1\right).
\end{eqnarray}
In order to use the super-server strategy with the tree approach proposed in \cite{Maddah-Ali_Decentralized_2014}, we need to choose an appropriate tree inside the network. It can be easily verified that the tree illustrated in Fig. \ref{Fig_Intro_Examp2}-(b) is the best choice. Therefore, $R_1$, the minimum rate of the super-server, is given by 
\begin{eqnarray}
\nonumber	R_1&=&\frac{K(1-M/N)}{1+KM/N} F \\ \nonumber
	&=&\frac{\frac{L^2}{2}(1-M/N)}{\frac{L}{2}} F \\ \nonumber
	&=& L(1-M/N) F.
\end{eqnarray}
The load $R_2$ on each server consists of those packets that are useful for at least a user which is a descendant of that server. We know that each packet benefits a subset of users of size:
\begin{eqnarray}
\nonumber	1+\frac{KM}{N}=\frac{L}{2}.
\end{eqnarray} 
Therefore, the ratio of packets routed on a specific edge to the total number of packets is:
\begin{eqnarray}
\nonumber	\frac{R_2}{R_1}&=& \frac{\sum_{i=1}^{L/2}{{L/2 \choose i}{{L^2/2-L/2 \choose L/2-i}}}}{{L^2/2 \choose L/2}}     \\ \nonumber
&=& 
1-\left(1-\frac{L/2}{L^2/2}\right)  \left(1-\frac{L/2}{L^2/2-1}\right) \dots \left(1-\frac{L/2}{L^2/2-(L/2-1)} 
\right)  \\ \nonumber
&\geq&  1- \left(1-\frac{1}{L}\right)^{\frac{L}{2}} \\ \nonumber
&\sim&1-e^{-1/2},
\end{eqnarray}
for large $L$. Thus, almost a constant number of packets generated by the server will be routed on each edge. This will result in a delay of:

\begin{eqnarray}\label{Eq_MainResults_Examp3_Delay_Super}
\nonumber	T_C&=& \frac{R_2}{m} \\ 
		&\sim& \left(1-e^{-1/2}\right) L(1-\frac{M}{N}) \frac{F}{m}
\end{eqnarray}	
time slots. 

A closer look at the network topology shows that the network is indeed flexible. Setting $p_i=L/2$ which satisfies $\sum{p_i}=K$ and using memory size $M$ where
\begin{eqnarray}	\nonumber M&=&\frac{N}{K}\frac{\sum_1^L{\frac{p_i(p_i-1)}{K-p_i+1}}}{\sum_1^L{\frac{p_i}{K-p_i+1}}} \\ \nonumber
&=& \frac{N}{K} \left(\frac{L}{2}-1\right) \\ 
&=& \frac{N}{L^2/2} \left(\frac{L}{2}-1\right),
\end{eqnarray}
Theorem \ref{Thm_flexible} can be used to achieve the following coding delay:
\begin{eqnarray}\label{Eq_MainResults_Examp3_Delay_Flex}
\nonumber T_C&=&\frac{1}{\sum_1^L{\frac{p_i}{K-p_i+1}}}\frac{F}{m} \\ \nonumber
&=& \frac{F/m}{L \frac{L/2}{L^2/2-L/2+1}} \\ 
&=&\left(1-\frac{M}{N}\right) \frac{F}{m}.
\end{eqnarray}

The above delay in (\ref{Eq_MainResults_Examp3_Delay_Flex}) is not only a scaling improvement compared with the super-server tree-based strategy with delay (\ref{Eq_MainResults_Examp3_Delay_Super}), but also the optimal delay. This is due to the fact that each user can store at most $\frac{M}{N}F$ bits of each file.

\begin{figure}
\begin{center}
\includegraphics[width=0.8\textwidth]{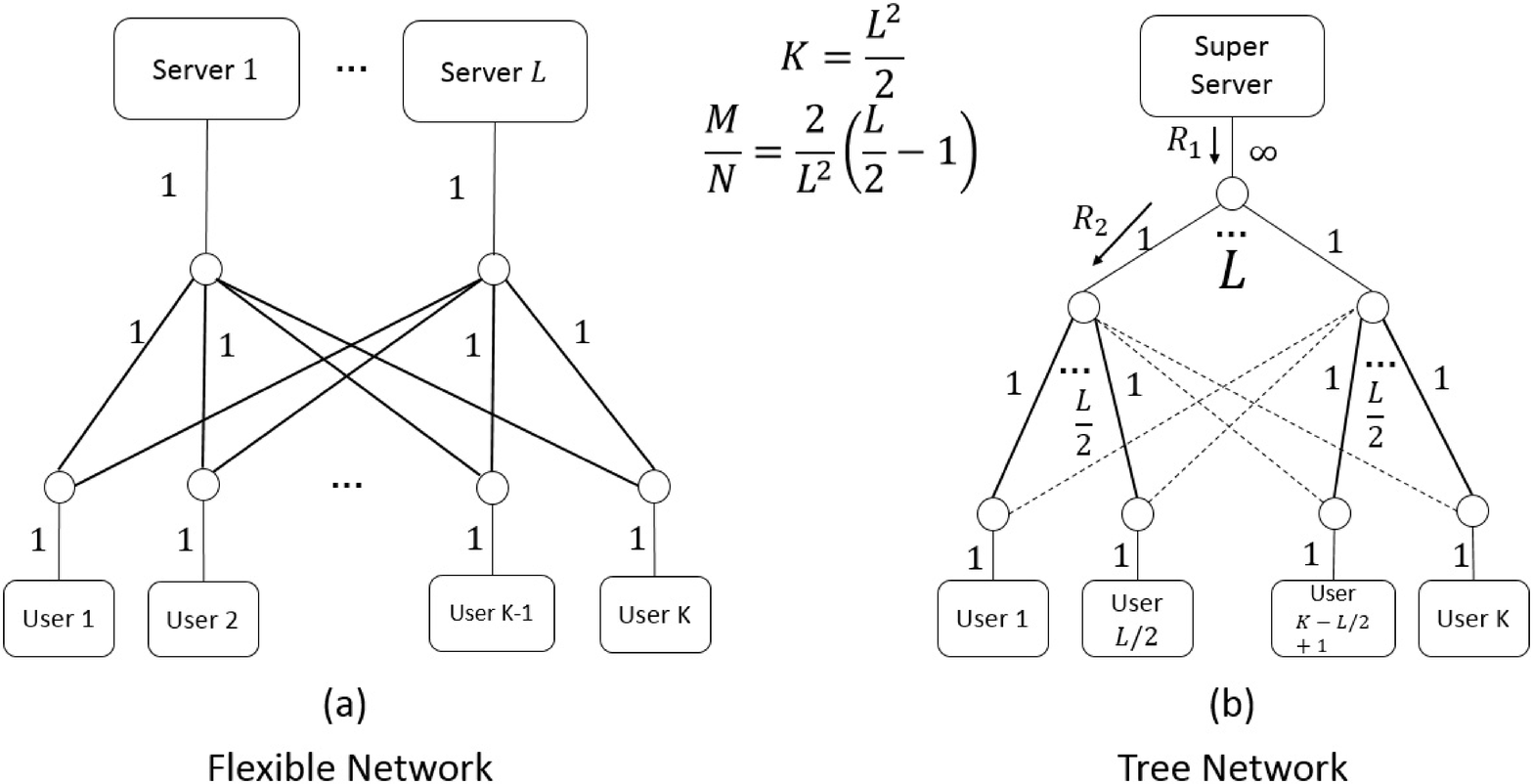}
\end{center}
\caption{Example \ref{Examp_Intro_2}. \label{Fig_Intro_Examp2}}
\end{figure}

\end{examp}

The optimality of the preceding coding scheme can be generalized to any flexible network where $K$ is divisible by $L$ as the following theorem states.
\begin{thm}\label{Thm_Order_Optimal}
	If $K$ is divisible by $L$, then the upper bound in Theorem \ref{Thm_flexible} is optimal within a multiplicative constant gap.
\end{thm}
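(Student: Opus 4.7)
The plan is to derive an information-theoretic cut-set lower bound on $D^*(M)$ for multi-server networks and show that the achievable delay of Theorem \ref{Thm_flexible}, specialized to the case $L\mid K$, matches this bound up to a universal multiplicative constant. The key observation is that both the upper and the lower bounds carry an explicit factor $1/L$, so the gap analysis reduces exactly to the single-server argument in \cite{Maddah-Ali_Fundamental_2014}.

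First I would derive the lower bound. Fix any $s\in\{1,\dots,K\}$ and pick $s$ users together with $n=\lfloor N/s\rfloor$ rounds of delivery whose demands comprise $sn$ distinct files. The $s$ users must jointly decode all $sn$ files from their caches (of total size $sMF$ bits) and the network outputs across the $n$ rounds; since the signals received in each round are a deterministic function of the $L$ servers' transmissions of length $T_C$ (carrying at most $LT_C m$ bits jointly), a standard entropy/cut-set manipulation gives
\[
sMF+n\,LT_C m\;\geq\;snF,
\]
and hence, for every admissible $s$,
\[
D^*(M)\;\geq\;\frac{1}{L}\left(s-\frac{sM}{\lfloor N/s\rfloor}\right)\frac{F}{m}.
\]

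Next I would specialize Theorem \ref{Thm_flexible}. Since $L\mid K$, the symmetric choice $p_i=p$ for all $i$ is feasible and yields corner points $M=\frac{N}{K}(p-1)$ and $T_C=\frac{K(1-M/N)}{L(1+KM/N)}\frac{F}{m}$ for $p\in\{2,\dots,K/L\}$. For $M<N/K$, memory-sharing with the trivial $M=0$ point (delay $(K/L)\frac{F}{m}$, obtained from $L$ parallel servers each serving a disjoint subset of users) covers the small-memory regime, while non-symmetric allocations of $p_i$ in Theorem \ref{Thm_flexible} extend the envelope toward larger $M$. The resulting upper bound is a piecewise-linear decreasing function of $M$.

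Finally I would bound the ratio. The factor $1/L$ cancels between upper and lower bounds, so the gap reduces to the single-server ratio
\[
\frac{K(1-M/N)/(1+KM/N)}{\max_s\left(s-sM/\lfloor N/s\rfloor\right)},
\]
which is the exact quantity \cite{Maddah-Ali_Fundamental_2014} show to be bounded by a universal constant (namely $12$) via a case split on the magnitude of $KM/N$ with a regime-dependent choice of $s$. The main obstacle is precisely this last step: carefully transporting MAN's regime-by-regime gap argument, including the treatment of memory-sharing intervals between corner points, without inflating the constant. Once the single-server gap is invoked as a black box, the constant-factor optimality for the multi-server flexible case follows immediately.
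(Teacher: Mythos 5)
Your low- and moderate-memory analysis is essentially the paper's own proof: the symmetric choice $p_i=K/L$ (more precisely $p_i=KM/N+1$ with $Q=K-(KM/N+1)L$) gives $T_C=\frac{1}{L}\frac{K(1-M/N)}{1+KM/N}\frac{F}{m}$, the cut-set bound contributes a matching $1/L$, and the gap reduces to the single-server factor-12 argument of \cite{Maddah-Ali_Fundamental_2014}. Up to that point your plan is sound.

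The genuine gap is in the large-memory regime $M>M^*=\frac{N}{K}\left(\frac{K}{L}-1\right)$, which is part of the upper bound of Theorem \ref{Thm_flexible} (it is covered by the $Q=0$ corner points with non-symmetric $p_i$, which you invoke to ``extend the envelope toward larger $M$''). At those points the achievable delay is $T_C=\left(1-\frac{M}{N}\right)\frac{F}{m}$, with \emph{no} $1/L$ factor, so your claim that ``the factor $1/L$ cancels'' fails exactly where those points live. Moreover, your converse is too weak there: by bounding the information crossing the cut by $nLT_Cm$ for every $s$ you obtain $D^*\geq \frac{1}{L}\left(s-\frac{s}{\lfloor N/s\rfloor}M\right)\frac{F}{m}$, and for $M$ close to $N$ the maximum over $s$ is attained at $s=1$, giving only $\frac{1}{L}\left(1-\frac{M}{N}\right)\frac{F}{m}$; the certified ratio is then $L$, which is not a universal constant. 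What is needed is the refinement in the paper's Lemma \ref{Lem_Converse}: the $s$ chosen users jointly receive at most $\min(s,L)\,m$ bits per slot (a single user gets only one symbol per channel use), so $D^*\geq \frac{1}{\min(s,L)}\left(s-\frac{s}{\lfloor N/s\rfloor}M\right)\frac{F}{m}$. Taking $s=1$ yields $D^*\geq\left(1-\frac{M}{N}\right)\frac{F}{m}$, which shows (as in Corollary \ref{Cor_Converse}) that all $Q=0$ points are exactly optimal, and the theorem then follows by combining this with your (correct) factor-12 argument for $\frac{N}{K}\leq M<M^*$. Without restoring the $\min(s,L)$ term, the constant-gap claim cannot be established for $M>M^*$.
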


\begin{proof}
See Section \ref{Sec_Flexible} for the proof.
\end{proof}

For flexible and topologically complex networks, finding a proper routing strategy that achieves the optimal coding delay may not be straightforward. To overcome this difficulty, internal nodes can perform simple random linear network coding which is oblivious to the network's topology. Although this strategy may not be optimal, it has the advantage of being practical and robust. In this way, the network model reduces to a linear network model and the following theorem provides an achievable coding delay for such networks. 

\begin{thm}\label{Thm_linear}
The coding delay for a linear network with $L$ servers is upper bounded by a piecewise-linear curve with the corner points
\begin{equation}
D^*(M) \leq \frac{K(1-M/N)}{\min(K,L+KM/N)}\frac{F}{m},
\end{equation}
where  $KM/N \in \mathbb{N}$ should be satisfied. 
\end{thm}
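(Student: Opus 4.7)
The plan is to adapt the multi-server scheme underlying Theorem~\ref{Thm_flexible} to the linear setting, replacing interference-free routing with a small interference-alignment step that exploits the randomness of $\mathbf{H}$. Assuming first that $r := KM/N$ is a nonnegative integer, I would use the standard Maddah-Ali--Niesen cache placement: split each file $W_n$ into ${K \choose r}$ equal subfiles $W_{n,T}$ indexed by $r$-subsets $T \subseteq [K]$, and cache $W_{n,T}$ at user $k$ iff $k \in T$. This meets the memory constraint exactly and ensures that in the MN codeword $m_\mathcal{T} := \sum_{k \in \mathcal{T}} W_{d_k,\mathcal{T}\setminus\{k\}}$ associated with any $(r+1)$-subset $\mathcal{T}$, each $k \in \mathcal{T}$ already knows every term except its own desired subfile.

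I would then group these codewords into batches indexed by $(L+r)$-subsets $\mathcal{S}\subseteq[K]$: the batch for $\mathcal{S}$ handles the ${L+r \choose r+1}$ codewords $m_\mathcal{T}$ with $\mathcal{T}\subseteq\mathcal{S}$. In each super-time-slot of the batch (each of length $F/(m{K \choose r})$ symbols, subject to a splitting factor discussed below), the $L$ servers jointly transmit $\mathbf{s} = \sum_{\mathcal{T}\subseteq\mathcal{S}} c_\mathcal{T}\, \beta_\mathcal{T}\, m_\mathcal{T}$, where $c_\mathcal{T}\in\mathbb{F}_q$ is an independent uniform scalar (fresh per super-time-slot) and $\beta_\mathcal{T}\in\mathbb{F}_q^L$ is a fixed $L$-vector orthogonal to $\mathbf{h}_k$ for every $k\in\mathcal{S}\setminus\mathcal{T}$. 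Since $|\mathcal{S}\setminus\mathcal{T}| = L-1$, these $L-1$ orthogonality constraints determine $\beta_\mathcal{T}$ up to scale (almost surely), and the effect is that each user $k\in\mathcal{S}$, after subtracting its cached terms, sees only the ${L+r-1 \choose r}$ desired codewords with $k\in\mathcal{T}\subseteq\mathcal{S}$; the interfering $m_\mathcal{T}$ with $k\notin\mathcal{T}$ are nulled by construction.

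Running ${L+r-1 \choose r}$ super-time-slots per batch then lets every user in $\mathcal{S}$ invert a square decoding matrix and recover all its batch-relevant subfiles. A short count shows each individual subfile $W_{d_k,T}$ lies in exactly ${K-r-1 \choose L-1}$ batches, so splitting each subfile into that many equal pieces (one per covering batch) removes the redundancy; the overall delay then collapses to ${K \choose L+r}{L+r-1 \choose r}/{K-r-1 \choose L-1}\cdot F/(m{K \choose r}) = (K-r)/(L+r)\cdot F/m$, matching the claimed upper bound. The regime $L+r \ge K$ reduces to a single-batch case in which each user decodes directly from $(1-M/N)F/m$ received symbols, and memory-sharing between consecutive integer values of $r$ yields the piecewise-linear envelope.

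The main obstacle is the probabilistic part: I need to simultaneously guarantee that every $(L-1)\times L$ submatrix of $\mathbf{H}$ has rank $L-1$ (so each $\beta_\mathcal{T}$ is well-defined and $\mathbf{h}_k\cdot\beta_\mathcal{T}\neq 0$ for $k\in\mathcal{T}$), and that the ${L+r-1 \choose r}\times{L+r-1 \choose r}$ decoding matrix at each user in each batch is nonsingular. Both are non-identically-zero polynomials in the i.i.d.\ entries of $\mathbf{H}$ and the scalars $c_\mathcal{T}$; a Schwartz--Zippel union bound over the polynomially many such events, combined with the paper's standing assumption that $q=2^m$ is taken sufficiently large, drives the failure probability to zero and validates the ``generic $\mathbf{H}$'' viewpoint.
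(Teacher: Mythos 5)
Your proposal is essentially the paper's own proof: identical MN placement, the same split of each subfile into ${K-t-1 \choose L-1}$ pieces, the same batching over $(t+L)$-subsets $S$ with zero-forcing vectors $\mathbf{u}_S^T$ orthogonal to the $L-1$ channels of $S\setminus T$, the same ${t+L-1 \choose t}$ repetitions giving each user an invertible decoding matrix, and the same counting yielding $\frac{K(1-M/N)}{L+KM/N}\frac{F}{m}$, with genericity handled by a union bound over nonzero polynomial conditions on $\mathbf{H}$ (and the random coefficients). The only cosmetic difference is that you attach one fresh random scalar per codeword $m_\mathcal{T}$ per slot, whereas the paper draws fresh random coefficients inside each combination $G_\omega(T)$ of mini-files; both make the per-user decoding matrix nonsingular with high probability.
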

\begin{proof}
	See Section \ref{Sec_Linear} for the proof.
\end{proof}

In linear networks, a packet intended for a certain number of users, in general, interferes with all other users. Proper pre-coding schemes can be adopted to reduce interference in such networks. Consequently, simultaneous transmission of multiple packets will further reduce network coding delay.  
In order to clarify the implications of Theorem \ref{Thm_linear}, we present the following example:

\begin{figure}
\begin{center}
\includegraphics[width=0.4\textwidth]{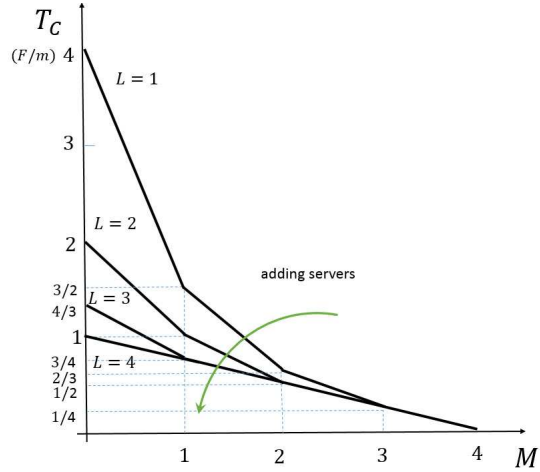}
\end{center}
\caption{Example \ref{Examp_Linear_2}: $N=4, K=4$.\label{Fig_4_4_Curves_Together}}
\end{figure}

\begin{examp}[$K=4, N=4$]\label{Examp_Linear_2}
Consider a network with $K=N=4$. Using  Theorem \ref{Thm_linear}, the coding delay for any $L \in \{1,2,3,4\}$ is given by 
$$T_C = \frac{4-M}{\min(4,L+M)}\frac{F}{m}. $$
The above delay is plotted in Fig. \ref{Fig_4_4_Curves_Together} for $L \in \{1,2,3,4\}$. The problem for $L=1$  reduces to that of \cite{Maddah-Ali_Fundamental_2014}. For $L=4$, we obtain a  multiplexing gain of $4$ by constructing four parallel interference-free links each from one server to one user (e.g. through Singular Value Decomposition) and the optimal coding delay is achieved. Networks with $L\in \{2,3\}$ are interesting cases where interference management is required to achieve the gain $\min(4,L+M)$ in the denominator. The detail of the coding strategy is rather involved and we delegate it to Appendices B and C. 

\end{examp} 

\section{Flexible Networks: Details}\label{Sec_Flexible}

In this section, we present an achievable scheme for the flexible networks leading to the result given in Theorem \ref{Thm_flexible}. We also provide a proof for the optimality result in Theorem \ref{Thm_Order_Optimal} through cut-set analysis.

For the achievability part, we need to provide the cache content placement and content delivery strategies.  Let us start with defining the following parameters: let $Q \in \{0,\dots,K-L\}$ and consider an integer solution of the following equation:
\begin{eqnarray}
\nonumber	p_1+\dots+p_L+p_{L+1}=K,
\end{eqnarray}
where $p_{L+1}=Q$ and $p_i\geq 2, i=1,\dots,L$. We also define 
\begin{eqnarray}
\nonumber	\alpha_i &\triangleq& {K \choose p_i-1}, i=1,\dots,L \\ \nonumber
	\gamma_i &\triangleq& \frac{(K-p_i)!p_i!}{p_1!\dots p_{L+1}!}, i=1,\dots,L+1 \\ \nonumber
	x &\triangleq& 1/\sum_1^L \alpha_i \gamma_i \\  
	x_i &\triangleq& \cases{
   \gamma_i x      &\quad $i=1,\dots,L$ \cr
   0 &\quad $i=L+1$ \cr
 }.
\end{eqnarray}
\\
\textbf{Cache Placement Strategy:} First, split each file $W_n$ into $L$ sub-files
\begin{eqnarray}
\nonumber 	W_n=\left(W_n^i : i=1,\dots,L\right),
\end{eqnarray}
where $W_n^i$ is of size $\alpha_i x_i F$. Then, split each sub-file $W_n^i$ into $\alpha_i$ equal-sized mini-files:
\begin{eqnarray}
\nonumber W_n^i=\left(W_{n,\tau_i}^i: \tau_i \subseteq [K], |\tau_i|=p_i-1 \right).
\end{eqnarray}
Finally, split each mini-file $W_{n,\tau_i}^i$ into $\gamma_i$ equal-sized pico-files of size $xF$ bits:
\begin{eqnarray}
\nonumber  W_{n,\tau_i}^i=\left( W_{n,\tau_i}^{i,j}: j=1,\dots,\gamma_i \right),
\end{eqnarray}
where $\gamma_i$ is an integer number. For each user $k$, we cache pico-file $W_{n,\tau_i}^{i,j}$ if $k \in \tau_i$, for all possible $i,j,n$. Then, the required memory size for each user is:
\begin{eqnarray}
\nonumber  M&=&\frac{1}{F} N \left(  \sum_{i=1}^L{{K-1 \choose p_i-2} \gamma_i x F} \right) \\ \nonumber
	&=& N \frac{\sum_{i=1}^L{ {K-1 \choose p_i-2} \gamma_i }}{\sum_{i=1}^L{ {K \choose p_i-1} \gamma_i }} \\ 
	&=& \frac{N}{K} \frac{ \sum_{i=1}^L {\frac{p_i(p_i-1)}{K-p_i+1}} } { \sum_{i=1}^L {\frac{p_i}{K-p_i+1}} },
\end{eqnarray}
which is consistent with the assumptions of Theorem \ref{Thm_flexible}.
\\
\\
\textbf{Content Delivery Strategy:} Define $P_1^i,\dots,P_{{K \choose p_i}}^i$ to be the collection of all $p_i$-subsets of $[K]$ for all $i=1,\dots L+1$. The delivery phase consists of $\frac{K!}{p_1!\dots p_{L+1}!}$ transmit slots. Each transmit slot is in one-to-one correspondence with one $(p_1,\dots,p_{L+1})$-partition of $[K]$. Consider the transmit slot associated with the partition
\begin{eqnarray}
\nonumber	\left\{ P^1_{\theta_1},\dots,P^{L+1}_{\theta_{L+1}} \right\},
\end{eqnarray}
where $\theta_i \in \left\{1,\dots,{K \choose p_i}\right\}$. Then, the server $i$ sends
\begin{eqnarray}
\nonumber	{+}_{r \in P^i_{\theta_i}}W_{d_r,P^i_{\theta_i} \backslash \{r\}}^{i,N(P^i_{\theta_i})}
\end{eqnarray}
to the subset of users $P^i_{\theta_i}$, interference-free from other servers, where the sum is in $\mathbb{F}_q$ and is over all $r \in P^i_{\theta_i}$. Since we have assumed a flexible network, simultaneous transmissions by all servers is feasible. Also, the index $N(P^i_{\theta_i})$ is chosen such that each new transmission consists of a fresh (not transmitted earlier) pico-file. Obviously, the virtual server $L+1$ does not transmit any packet.

Since each pico-file consists of $x\frac{F}{m}$ symbols, at each transmission slot we should send a block of size $L$-by-$x\frac{F}{m}$ by the servers. Also, since this action should be performed for all $\frac{K!}{p_1!\dots p_{L+1}!}$ slots, the delay of this scheme will be:
\begin{eqnarray}
\nonumber	T_c&=&\frac{K!}{p_1!\dots p_{L+1}!} \times x\frac{F}{m} \\ 
	&=&\frac{1}{\sum_1^L{\frac{p_i}{K-p_i+1}}} \frac{F}{m},
\end{eqnarray}
as stated in Theorem \ref{Thm_flexible}. Consequently, if we show that through the aforementioned number of transmit slots all users will be able to recover their requested files, the proof is complete.
\\
\textbf{Correctness Proof:} The main theme of this scheme is to divide each file into $L$ sub-files, and to assign each sub-file to a single server. Then, each server's task is to deliver the assigned sub-files to the desired users (see Fig. \ref{Fig_Flexible_Proof}).

Consider server $i$. This server handles sub-files $W_n^i, n \in [N]$ though the following delivery tasks:
\begin{eqnarray}
\nonumber & & W_{d_1}^i  \hspace{2mm} 
\stackrel{\textrm{server i} }{\Longrightarrow}   \hspace{2mm} \textrm{User 1} \\ \nonumber
& & W_{d_2}^i  \hspace{2mm} \stackrel{\textrm{server i} }{\Longrightarrow} \hspace{2mm} \textrm{User 2} \\ \nonumber
& & \vdots \\ \nonumber
& & W_{d_K}^i \hspace{2mm} \stackrel{\textrm{server i} }{\Longrightarrow} \hspace{2mm}  \textrm{User K} 
\end{eqnarray}
The above formulation leads to a single server problem \cite{Maddah-Ali_Fundamental_2014} with files of size $F_i= \alpha_i x_i F$ bits. It can be easily verified that the proposed  cache placement strategy for each sub-file mimics that of \cite{Maddah-Ali_Fundamental_2014} for single-server problems. Therefore, if we demonstrate that this server is able to send a common message of size $x_i\frac{F}{m}$ symbols to all $p_i$-subsets of users, then this server can handle this single-server problem successfully. However, in the above scheduling scheme, the server benefits each $p_i$-subset of the users by a common message of size $x\frac{F}{m}$ symbols (a pico-file size), $\gamma_i$ times. Consequently, the total volume of common message that this server is able to send to each $p_i$-subset is $\gamma_i \cdot x\frac{F}{m}=x_i\frac{F}{m}$ symbols. 

Since by proper scheduling scheme in flexible networks all servers can perform the same task simultaneously, all requested portions of files will be delivered. It should be noted that the portion of each file assigned to the virtual server is $x_{L+1}=0$. Algorithm 1 presents the pseudo-code of the procedure described above.

\begin{figure}
\begin{center}
\includegraphics[width=0.7\textwidth]{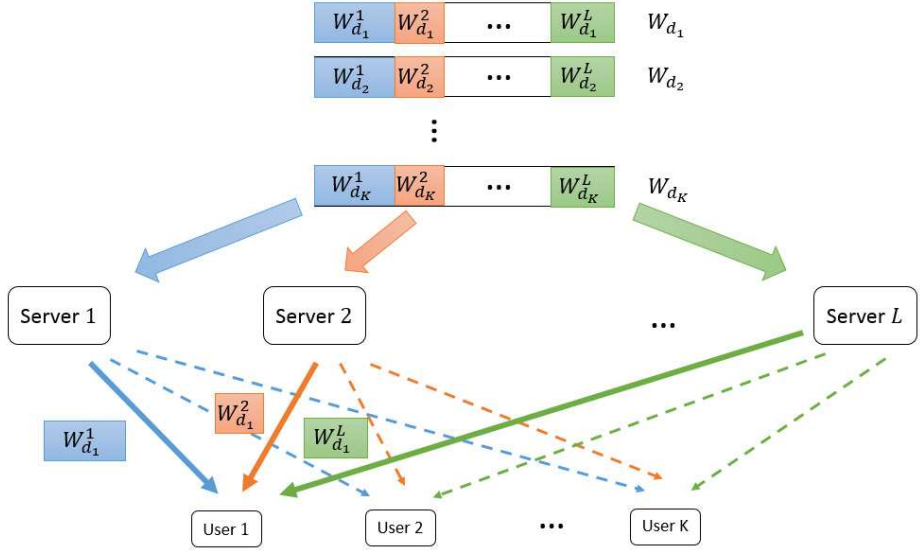}
\end{center}
\caption{Flexible network file distribution for proof of Theorem \ref{Thm_flexible}. \label{Fig_Flexible_Proof}}
\end{figure}

\begin{algorithm}\label{Alg_Main_Flexible}
\caption{Multi-Server Coded Caching - Flexible Networks}
\begin{algorithmic}[1]
\Procedure{PLACEMENT}{$W_1,\dots,W_N,p_1,\dots,p_{L+1}$}
\State $\alpha_i \gets {K \choose p_i-1}$, $i=1,\dots,L$
\State $\gamma_i \gets ((K-p_i)!p_i!)/(p_1!\dots,p_{L+1}!)$, $i=1,\dots,L+1$
\State $x \gets 1/(\sum_1^L{\alpha_i \gamma_i})$
\State $x_i \gets \gamma_i x$, $i=1,\dots,L$
\State $x_{L+1} \gets 0$
\ForAll{$n \in [N]$}
\State split $W_n$ into $(W_n^i: i=1,\dots,L)$, where $|W_n^i|=\alpha_i x_i$
\ForAll{$i=1,\dots,L$}
\State split $W_n^i$ into $(W_{n,\tau_i}^i: \tau_i \subset [K], |\tau_i|=p_i-1)$ of equal size
\ForAll{$\tau_i \subset [K], |\tau_i|=p_i-1$}
\State split $W_{n,\tau_i}^i$ into $(W_{n,\tau_i}^{i,j}: j=1,\dots,\gamma_i)$ of equal size
\EndFor
\EndFor
\EndFor
\ForAll{$k \in [K]$}
\ForAll{$i=1,\dots,L$}
\State $Z_k \gets (W_{n,\tau_i}^{i,j}: \tau_i \subset [K], |\tau_i|=p_i-1, k \in \tau_i, j=1,\dots,\gamma_i, n \in [N])$
\EndFor
\EndFor
\EndProcedure
\\
\Procedure{DELIVERY}{$W_1,\dots,W_N$, $d_1,\dots,d_K$, $p_1,\dots,p_{L+1}$}
\ForAll{$i=1,\dots,L$}
\ForAll{$j=1,\dots,{K \choose p_i}$}
\State $N({P}^i_j) \gets 1 $
\EndFor
\EndFor
\ForAll{partitions of $[K]$ with sizes $p_1,\dots,p_{L+1}$, ($p_i \geq 2, i=1,\dots,L$)}
\State $\{P^1_{\theta_1},\dots,P^{L+1}_{\theta_{L+1}}\} \gets $ selected partition
\State \textbf{transmit} $ \mathbf{X}(\{P^1_{\theta_1},\dots,P^{L+1}_{\theta_{L+1}}\}) = \left[ {\begin{array}{c}
   {+}_{r \in P^1_{\theta_1}}W_{d_r,P^1_{\theta_1} \backslash \{r\}}^{1,N(P^1_{\theta_1})} \Rightarrow P^1_{\theta_1} \\
	\vdots \\
   {+}_{r \in P^L_{\theta_L}} W_{d_r,P^L_{\theta_L} \backslash \{r\}}^{L,N(P^L_{\theta_L})} \Rightarrow P^L_{\theta_L} \\
  \end{array} } \right] $
  
\ForAll{$i=1,\dots,L$}
\State $N(P^i_{\theta_i}) \leftarrow N(P^i_{\theta_i}) +1$
\EndFor   
\EndFor

\EndProcedure

\end{algorithmic}
\end{algorithm}

To prove Theorem \ref{Thm_Order_Optimal}, we first state the following lemma:

\begin{lem}\label{Lem_Converse}
The coding delay for a general network with $L$ servers is lower bounded by 
\begin{equation}\label{Eq_Converse_Theorem}
D^*(M) \geq \max_{s \in \{1,\dots,K\}}\frac{1}{\min(s,L)}\left(s-\frac{s}{\lfloor \frac{N}{s} \rfloor}M\right) \frac{F}{m}.
\end{equation}
\end{lem}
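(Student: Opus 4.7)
The plan is to adapt the classical cut-set converse of Maddah-Ali and Niesen to the multi-server setting, where the new ingredient is that the min-cut between the $L$ servers and any set of $s$ users is $\min(s,L)$ symbols per time slot rather than a single symbol per time slot. I would carry out the argument in three steps: (i) fix the cut and bound its per-slot capacity, (ii) run the standard multi-round decoding argument over $\lfloor N/s \rfloor$ disjoint $s$-subsets of the library, and (iii) solve the resulting inequality and take the max over $s$.

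For step (i), fix $s \in \{1,\dots,K\}$ and any $s$ users, say $\mathcal{S}\subseteq [K]$. Place a cut between $\{\text{servers}\}\cup\{\text{internal network}\}$ on one side and $\mathcal{S}$ on the other. Per time slot the servers inject at most $L$ symbols into the network and the users in $\mathcal{S}$ read at most $s$ symbols off the network; since the network is memoryless across time slots, any information passing the cut within one slot is a deterministic (or random) function of the server outputs of that slot and is received inside $\mathcal{S}$ in that slot. Thus the total information flowing across the cut during $T_C$ slots is at most $\min(s,L)\cdot m\cdot T_C$ bits.

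For step (ii), pick $\lfloor N/s\rfloor$ pairwise-disjoint $s$-subsets $\mathcal{F}_1,\dots,\mathcal{F}_{\lfloor N/s\rfloor}$ of the library. In round $\ell$, the $s$ users in $\mathcal{S}$ request the $s$ distinct files in $\mathcal{F}_\ell$ using one bijection, and a worst-case delivery scheme takes at most $T_C$ slots to satisfy them. Concatenating all $\lfloor N/s\rfloor$ rounds, the concatenation of what crosses the cut, together with the cached contents $\{Z_k:k\in\mathcal{S}\}$ of total size $\leq sMF$ bits, must allow reconstruction of $s\lfloor N/s\rfloor$ distinct files of $F$ bits each (a standard entropy/Fano inequality formalization makes this rigorous). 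Using the cut bound from step (i) on each round,
\begin{equation}
\nonumber \lfloor N/s\rfloor\,\min(s,L)\,m\,T_C \;+\; sMF \;\geq\; s\,\lfloor N/s\rfloor\,F.
\end{equation}

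For step (iii), solve for $T_C$ to obtain
\begin{equation}
\nonumber T_C \;\geq\; \frac{1}{\min(s,L)}\Bigl(s - \frac{s}{\lfloor N/s\rfloor}M\Bigr)\frac{F}{m},
\end{equation}
and take the maximum over $s\in\{1,\dots,K\}$, which holds for every achievable scheme and hence lower-bounds $D^*(M)$. The one subtle step is step (i): in a dedicated or flexible network the cut-capacity bound $\min(s,L)$ per slot is geometrically obvious, but for an arbitrary DAG with coding nodes one must justify that no amount of internal coding can push more than $\min(s,L)$ independent symbols per slot across the min cut from $L$ source edges to $s$ sink edges. I expect this to be the main obstacle, and I would handle it by invoking the standard max-flow/min-cut argument for network information flow together with the assumed unit capacity and delay-free, interference-free nature of each edge, which forces the aggregate entropy crossing the cut per slot to be at most $\min(s,L)\cdot m$ bits regardless of the (possibly random) node functions.
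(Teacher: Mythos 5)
Your proposal is correct and follows essentially the same cut-set argument as the paper's Appendix A: the same $\lfloor N/s\rfloor$-round request construction over $s$ users, the same cut of capacity $\min(s,L)$ symbols per slot, and the same inequality $s\lfloor N/s\rfloor F \leq \lfloor N/s\rfloor \min(s,L)\,m\,T_C + sMF$ before maximizing over $s$. Your added remark justifying the $\min(s,L)$ per-slot cut bound for arbitrary coding nodes is a point the paper leaves implicit, but it does not change the route.
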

\begin{proof}
See Appendix A for the proof.
\end{proof}

The above lemma can be used to prove optimality of the proposed scheme in some range of parameters. The following corollary states the result. 
\begin{cor}\label{Cor_Converse}
	All $(M-T_C)$ pairs in Theorem \ref{Thm_flexible} corresponding to $Q=0$ are optimal. Thus, the converse line $\left(1-\frac{M}{N}\right)\frac{F}{m}$ is achieved for $M^*\leq M \leq N$, where
\begin{equation}
M^*=\min_{p_1+\dots+p_L=K}\frac{N}{K}\frac{\sum_1^L{\frac{p_i(p_i-1)}{K-p_i+1}}}{\sum_1^L{\frac{p_i}{K-p_i+1}}}.
\end{equation}
\end{cor}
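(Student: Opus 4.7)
The plan is to reduce the corollary to a single algebraic identity: showing that every $(M,T_C)$ pair produced by Theorem~\ref{Thm_flexible} at $Q=0$ lies on the line $T_C=\left(1-\frac{M}{N}\right)\frac{F}{m}$, and then invoking Lemma~\ref{Lem_Converse} with $s=1$ to certify that this line is itself a lower bound. Combining the two gives optimality of each $Q=0$ pair, and time-sharing between the extreme point $(M^*,T_C^*)$ and the trivial point $(N,0)$ fills in the remaining range $M^*\le M\le N$.

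The first and main step is the algebraic verification. Fix any tuple $(p_1,\dots,p_L)$ with $p_i\ge 2$ and $p_1+\dots+p_L=K$, and compute
\begin{equation*}
1-\frac{M}{N}=1-\frac{1}{K}\,\frac{\sum_{i=1}^L\frac{p_i(p_i-1)}{K-p_i+1}}{\sum_{i=1}^L\frac{p_i}{K-p_i+1}}=\frac{\sum_{i=1}^L\frac{p_i(K-p_i+1)}{K-p_i+1}}{K\sum_{i=1}^L\frac{p_i}{K-p_i+1}}=\frac{\sum_{i=1}^L p_i}{K\sum_{i=1}^L\frac{p_i}{K-p_i+1}}=\frac{1}{\sum_{i=1}^L\frac{p_i}{K-p_i+1}},
\end{equation*}
where the last equality uses the constraint $\sum_i p_i=K$. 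Comparing with the expression for $T_C$ in Theorem~\ref{Thm_flexible} shows that $T_C=\left(1-\frac{M}{N}\right)\frac{F}{m}$ for every $Q=0$ partition. This is the key identity driving everything.

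The second step is the converse match: Lemma~\ref{Lem_Converse} applied with $s=1$ gives $D^*(M)\ge(1-M/N)\frac{F}{m}$ (since $\lfloor N/1\rfloor=N$ and $\min(1,L)=1$). Together with the achievability above, this proves that every $Q=0$ pair meets the lower bound, establishing part~(a). For part~(b), note that at $M=N$ caching all files yields $T_C=0$ trivially, and at $M=M^*$ the corresponding optimal partition achieves $T_C^*=(1-M^*/N)\frac{F}{m}$. Since both endpoints lie on the line $T_C=(1-M/N)\frac{F}{m}$, time-sharing between the two schemes (splitting each file into two pieces, one handled with memory $M^*$ and one cached directly) achieves every point on the segment, which coincides with the $s=1$ converse line for all $M\in[M^*,N]$.

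The only subtlety I anticipate is a divisibility/integer-parameter nuisance: the corollary implicitly assumes $M^*$ and $T_C^*$ correspond to a choice of $(p_1,\dots,p_L)$ for which the pico-file decomposition in Theorem~\ref{Thm_flexible} is genuinely realizable (i.e., $\gamma_i\in\mathbb{N}$). If that does not hold for the minimizing partition, one has to invoke a standard file-splitting/memory-sharing argument to interpolate; but since the achievable curve in Theorem~\ref{Thm_flexible} is already defined as the lower convex envelope of its corner points, this is routine and does not affect the statement. The algebraic identity in the first display is the single nontrivial ingredient; everything else is bookkeeping.
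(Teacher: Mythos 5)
Your proposal is correct and follows essentially the same route as the paper: verify that every $Q=0$ pair of Theorem~\ref{Thm_flexible} satisfies $T_C=\left(1-\frac{M}{N}\right)\frac{F}{m}$ (the paper states this via the identity $\left(1-\frac{M}{N}\right)\frac{F}{m}=\left(1-\frac{Q}{K}\right)T_C$ and calls it ``simple calculations,'' which you carry out explicitly), then match it against Lemma~\ref{Lem_Converse} with $s=1$ and minimize over partitions to identify $M^*$. Your explicit time-sharing with the trivial point $(N,0)$ to cover all of $[M^*,N]$ is just the memory-sharing argument the paper leaves implicit in the ``straight lines connecting them'' clause of Theorem~\ref{Thm_flexible}, so there is no substantive difference.
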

\begin{proof}
Theorem \ref{Thm_flexible} states that all the $(M-T_C)$ pairs in (\ref{Eq_Main_Resuls_Flexile_Th}) are achievable. By some simple calculations one can show that for these achievable pairs we have:
\begin{eqnarray}\label{Eq_Flexible_Converse_Cor_1}
	\left(1-\frac{M}{N}\right)\frac{F}{m}=\left(1-\frac{Q}{K}\right) T_C.
\end{eqnarray}
Therefore, if we put $Q=0$ in Theorem \ref{Thm_flexible}, all the corresponding $(M-T_C)$ pairs satisfy
\begin{eqnarray}
\nonumber	T_C=\left(1-\frac{M}{N}\right)\frac{F}{m}.
\end{eqnarray}
On the other hand, by considering the case of $s=1$ in Lemma \ref{Lem_Converse} we know that the optimal coding delay satisfies:
\begin{eqnarray}
\nonumber	D^*(M) \geq \left(1-\frac{M}{N}\right)\frac{F}{m},
\end{eqnarray} 
which is matched to our achievable coding delay . Therefore, by setting $Q=0$ in  Theorem \ref{Thm_flexible}, for all 
\begin{eqnarray}
\nonumber M=\frac{N}{K}\frac{\sum_1^L{\frac{p_i(p_i-1)}{K-p_i+1}}}{\sum_1^L{\frac{p_i}{K-p_i+1}}}, p_1+\dots,p_L=K, p_i \geq 2,
\end{eqnarray} 
the achievable coding delay is optimum. By minimizing the cache size, over all partitionings satisfying $p_1+\dots,p_L=K, p_i \geq 2$, the proof is complete.
\end{proof}

There is an interesting intuition behind Eq. (\ref{Eq_Flexible_Converse_Cor_1}). In the proposed scheme for flexible networks, we assigned a subset of $Q$ users to the virtual server, and all the other $K-Q$ users benefited from other servers. Thus, through each transmission, the ratio $\frac{K-Q}{K}$ of users will be real users. This is exactly the coefficient that shows how close is the achieved delay to the optimal curve $(1-M/N)F/m$.

Finally, we are ready to prove Theorem \ref{Thm_Order_Optimal}. We consider two regimes for cache sizes. First , we let 
\begin{eqnarray}
\nonumber	M^*=\frac{N}{K}\left(\frac{K}{L}-1\right).
\end{eqnarray}
In the first regime where $M \geq M^*$, using Theorem \ref{Thm_flexible} with $Q=0$ and $p_1,\dots,p_L=\frac{K}{L}$, we obtain:
\begin{eqnarray}
\nonumber T_C = \left(1-\frac{M}{N}\right)\frac{F}{m}.
\end{eqnarray} 
As Corollary \ref{Cor_Converse} states, for this case the optimal curve is achieved. 

For the second regime where  $M<M^*$ (such that $KM/N \in \mathbb{N}$), set
\begin{eqnarray}
\nonumber	& &Q=K-\left(\frac{KM}{N}+1\right)L \\ \nonumber
	& & p_1,\dots,p_L=\frac{K-Q}{L} = \left(\frac{KM}{N}+1\right).
\end{eqnarray}
Then, we obtain:
\begin{eqnarray}
\nonumber	T_C = \frac{1}{L}\frac{K(1-M/N)}{1+KM/N}.
\end{eqnarray}

On the other hand, from Lemma \ref{Lem_Converse} we have:

\begin{eqnarray}
\nonumber D^* &\geq& \max_{s \in \{1,\dots,K\}}\frac{1}{\min(s,L)}\left(s-\frac{s}{\lfloor \frac{N}{s} \rfloor}M\right) \frac{F}{m} \\ \nonumber
&\geq& \max_{s \in \{1,\dots,K\}}\frac{1}{L}\left(s-\frac{s}{\lfloor \frac{N}{s} \rfloor}M\right) \frac{F}{m} \\ \nonumber
&\geq& \frac{1}{L}\frac{1}{12}\frac{K(1-M/N)}{1+KM/N}, \\
&\geq& \frac{1}{12} T_C,
\end{eqnarray}
where the last inequality follows from \cite{Maddah-Ali_Fundamental_2014}. This concludes the proof of Theorem \ref{Thm_Order_Optimal}.

\section{Linear Networks: Details}\label{Sec_Linear}

In order to explain the main concepts behind the coding strategy proposed for linear networks, we will first present a simple example:

\begin{figure}
\begin{center}
\includegraphics[width=0.5\textwidth]{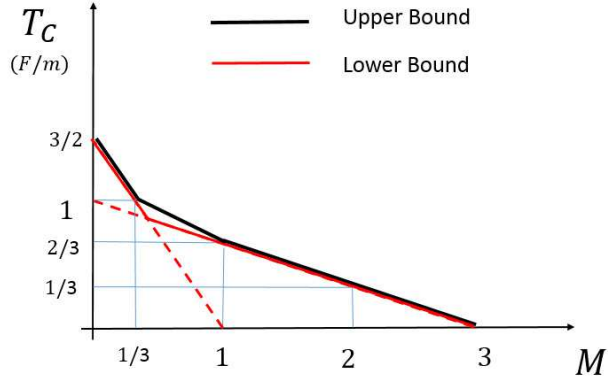}
\end{center}
\caption{Example \ref{Examp_Linear_1} ($L=2, K=3, N=3$): Lower and upper bounds on the coding delay.\label{Fig_Multi_Server_3_3_2_Example}}
\end{figure}

\begin{examp}[$L=2, K=3, N=3$]\label{Examp_Linear_1}
In this example, we consider a network consisting of $L=2$ servers, $K=3$ users, and a library of $N=3$ files, namely $W_1=A$, $W_2=B$, and $W_3=C$. By definition of linear networks the input-output relation of this network is characterized by a $3$-by-$2$ random matrix $\mathbf{H}$. Lower and upper bounds for the coding delay of this setting are shown in Fig. \ref{Fig_Multi_Server_3_3_2_Example}. The lower bound is due to Lemma \ref{Lem_Converse} as follows:
\begin{equation}
D^* \geq \max\left(1-\frac{M}{3},\frac{3-3M}{2} \right)\frac{F}{m}.
\end{equation}
The upper bound is due to Theorem \ref{Thm_linear} except the achievable pair $(M,T_C)=(\frac{1}{3},1)$, which will be discussed later. We have also exploited the fact that the straight line connecting every two achievable points on the $M-T_C$ curve is also achievable, as shown in \cite{Maddah-Ali_Fundamental_2014}. In order to get a glimpse of the ideas of the coding strategy behind Theorem \ref{Thm_linear}, next we discuss the achievable $(M,T_C)$ pair $(1,\frac{2}{3})$. In this case, as we will show, we can benefit both from the local/global caching gain (provided by cache of each user), and the multiplexing gain (provided by multiple servers in the network). The question is how to design an scheme so that we can exploit both gains simultaneously. In what follows we provide the solution:

Suppose that (without loss of generality) in the second phase, the first, second, and third users request files $A$, $B$, and $C$ respectively. Assume that the cache content placement is similar to that of \cite{Maddah-Ali_Fundamental_2014}: First, divide each file into three equal-sized non-overlapping sub-files:
\begin{eqnarray}
\nonumber A&=&[A_1,A_2,A_3] \\ \nonumber
B&=&[B_1,B_2,B_3] \\ \nonumber
C&=&[C_1,C_2,C_3].
\end{eqnarray}
Then, put the following contents in the cache of users:
\begin{eqnarray}
\nonumber Z_1&=&[A_1,B_1,C_1] \\ \nonumber
Z_2&=&[A_2,B_2,C_2] \\ \nonumber
Z_3&=&[A_3,B_3,C_3].
\end{eqnarray}
Let $L(x_1,\dots,x_m)$ be a random linear combination of $x_1,\dots,x_m$ as defined earlier. Consequently, in this strategy, the two servers send the following transmit block:
\begin{equation}\label{Eq_L2_K3_N3_M1_transmit}
\mathbf{X}=[\mathbf{h}_1^{\perp}L_1^1(C_2,B_3)+\mathbf{h}_2^{\perp}L_2^1(A_3,C_1)+\mathbf{h}_3^{\perp}L_3^1(A_2,B_1), \mathbf{h}_1^{\perp}L_1^2(C_2,B_3)+\mathbf{h}_2^{\perp}L_2^2(A_3,C_1)+\mathbf{h}_3^{\perp}L_3^2(A_2,B_1) ].
\end{equation}
where the random linear combination operator $L(\cdot,\cdot)$ operates on sub-files, in an element-wise manner, and $\mathbf{h}_i^{\perp}$ is an orthogonal vector to $\mathbf{h}_i$ (i.e. $\mathbf{h}_i.\mathbf{h}_i^{\perp}=0$). Let us focus on the first user who will receive:
\begin{eqnarray}
\nonumber \mathbf{h}_1.\mathbf{X}&=&[(\mathbf{h}_2^{\perp} . \mathbf{h}_1) L_2^1(A_3,C_1)+(\mathbf{h}_3^{\perp} . \mathbf{h}_1)L_3^1(A_2,B_1), (\mathbf{h}_2^{\perp} . \mathbf{h}_1)L_2^2(A_3,C_1)+(\mathbf{h}_3^{\perp} . \mathbf{h}_1)L_3^2(A_2,B_1) ] \\ 
&=& [L^1(A_2,A_3,C_1,B_1),L^2(A_2,A_3,B_1,C_1)].
\end{eqnarray} 
As the first user has already cached $B_1$ and $C_1$ in the first phase, by subtracting the effect of interference terms, the first user can recover:
\begin{eqnarray}
\nonumber [L(A_2,A_3),L'(A_2,A_3)],
\end{eqnarray}
which consists of two independent (with high probability for large field size $q$) linear combinations  of $A_2$ and $A_3$. By solving these independent linear equations, such user can decode $A_2$ and $A_3$, and with the help of $A_1$ cached at the first phase, he can recover the whole requested file $A$. It can easily be verified that other users can also decode their requested files in a similar fashion. The transmit block size indicated in (\ref{Eq_L2_K3_N3_M1_transmit}) is $2$-by-$\frac{2F}{3m}$, resulting in $T_C=\frac{2F}{3m}$ time slots.

Let us forget about one of the servers for a moment and assume we have just one server. Then, the scheme proposed in \cite{Maddah-Ali_Fundamental_2014} only benefits two users per transmission through pure global caching gain. Also, in the case of two servers and no cache memory (the aforementioned case of $M=0$), we could design an scheme which benefited only two users through pure multiplexing gain. However, through the proposed strategy, we have designed an scheme which exploited both the global caching and multiplexing gains such that all the three users could take advantage from each transmission. 

Finally, let us discuss the achievable pair $(M,T_C)=(\frac{1}{3},1)$, where we need to adopt a different strategy. Assume we divide each of files $A$, $B$ and $C$ into three equal parts and fill the caches as follows:
\begin{eqnarray}
\nonumber Z_1&=&[A_1+B_1+C_1] \\ \nonumber
Z_2&=&[A_2+B_2+C_2] \\ \nonumber
Z_3&=&[A_3+B_3+C_3].
\end{eqnarray}
Consequently, the servers transmit the following vectors:
\begin{eqnarray}
\nonumber \mathbf{X}_1&=&\frac{\mathbf{h}_3^{\perp}}{\mathbf{h}_1.\mathbf{h}_3^{\perp}}B_1 + \frac{\mathbf{h}_2^{\perp}}{\mathbf{h}_1.\mathbf{h}_2^{\perp}}C_1 \\ \nonumber
\mathbf{X}_2&=&\frac{\mathbf{h}_3^{\perp}}{\mathbf{h}_2.\mathbf{h}_3^{\perp}}A_2 + \frac{\mathbf{h}_1^{\perp}}{\mathbf{h}_2.\mathbf{h}_1^{\perp}}C_2 \\ 
\mathbf{X}_3&=&\frac{\mathbf{h}_2^{\perp}}{\mathbf{h}_3.\mathbf{h}_2^{\perp}}A_3 + \frac{\mathbf{h}_1^{\perp}}{\mathbf{h}_3.\mathbf{h}_1^{\perp}}B_3.
\end{eqnarray}
It can be easily verified that the first user receives $A_2$, $A_3$, and $B_1+C_1$. So, with the help of its cache content, it can decode the whole file $A$. Similarly, the other users can decode their requested files. As each block $\mathbf{X}_i$ is a $2$-by-$\frac{F}{3m}$ matrix of symbols, the total delay required to fulfill the users' demands is  $T_C=\frac{F}{m}$ time slots.
\end{examp}

Example \ref{Examp_Linear_2}, also, discusses the coding delay for a linear network with four users. The details of the coding strategy of Example \ref{Examp_Linear_2}, which are provided at Appendices B and C, further clarify the basic ideas behind the proposed scheme. However, in the rest of this section, we provide the formal proof of Theorem \ref{Thm_linear}.
\\ \\
\textbf{Cache Placement Strategy:} The cache content placement phase is identical to \cite{Maddah-Ali_Fundamental_2014}: Define $t\triangleq MK/N$, and divide each file into ${K \choose t}$ non-overlapping sub-files as\footnote{It should be noted that the definition of sub-files and mini-files here differs from that of flexible networks.}:
\begin{eqnarray}
\nonumber W_n=\left(W_{n,\tau}: \tau \subset [K], |\tau|=t\right), n=1,\dots,N,
\end{eqnarray}
where each sub-file consists of $F/{K \choose t}$ bits. In the first phase, we store the sub-file $W_{n,\tau}$ in the cache of user $k$ if $k \in \tau$. Therefore, the total amount of cache each user needs for this placement is:
\begin{eqnarray}
\nonumber N\frac{F}{{K \choose t}}{K-1 \choose t-1} = MF
\end{eqnarray}
bits.

We further divide each sub-file into ${K-t-1 \choose L-1}$ non-overlapping equal-sized mini-files as follows:
\begin{eqnarray}
\nonumber W_{n,\tau}=\left(W_{n,\tau}^j: j=1,\dots,{K-t-1 \choose L-1}\right).
\end{eqnarray}
Thus, each mini-file consists of $F/\left({K \choose t} {K-t-1 \choose L-1}\right)$ bits.
\\
\\
\textbf{Content Delivery Strategy:} Consider an arbitrary $(t+L)$-subset of users denoted by $S$ (i.e. $S \subseteq [K], |S|=t+L$). For this specific subset $S$  denote all $(t+1)$-subsets of $S$ by $T_i, i=1,\dots,{t+L \choose t+1}$ (i.e. $T_i \subseteq S, |T_i|=t+1$). First, we assign a $L$-by-$1$ vector $\mathbf{u}_{S}^{T_i}$ to each $T_i$ such that
\begin{eqnarray}\label{Eq_Vector_Constraints}
\nonumber \mathbf{u}_{S}^{T_i} &\perp& \mathbf{h}_j \hspace{5mm} \mathrm{for \hspace{2mm} all} \hspace{5mm} j\in S \backslash T_i \\ 
\mathbf{u}_{S}^{T_i} &\not \perp& \mathbf{h}_j \hspace{5mm} \mathrm{for \hspace{2mm} all} \hspace{5mm} j\in T_i.
\end{eqnarray}
The following lemma specifies the required field size such that the aforementioned condition is met with high probability:
\begin{lem}
If the elements of the network transfer matrix $\mathbf{H}$ are uniformly and independently chosen from $\mathbb{F}_q$, then we can find vectors which satisfy (\ref{Eq_Vector_Constraints}) with high probability if:
\begin{equation}
q \gg (t+1) {K \choose t+L}{t+L \choose t+1}.
\end{equation}
\end{lem}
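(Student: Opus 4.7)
The plan is a standard probabilistic argument via the union bound, keyed to the observation that each perpendicularity system already pins $\mathbf{u}_S^{T_i}$ down to (essentially) a single direction, so the ``non-perpendicularity'' conditions become $t+1$ independent scalar coincidences to rule out.

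First I would fix a particular $S \subseteq [K]$ with $|S|=t+L$ and a particular $T_i \subseteq S$ with $|T_i|=t+1$, and examine the linear system defining $\mathbf{u}_S^{T_i}$. The constraint $\mathbf{u}_S^{T_i} \perp \mathbf{h}_j$ for every $j \in S \setminus T_i$ is a homogeneous system of $|S \setminus T_i|=L-1$ equations in $\mathbb{F}_q^L$. Because the vectors $\mathbf{h}_j$ are i.i.d. uniform on $\mathbb{F}_q^L$, any prescribed collection of $L-1$ of them is linearly independent except on an event of probability $O(1/q)$; on the complementary event the orthogonal complement is one-dimensional, so $\mathbf{u}_S^{T_i}$ is determined up to a nonzero scalar and we may pick any nonzero representative.

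Next I would bound the failure probability of the second family of conditions. For each $j \in T_i$, note that $\mathbf{h}_j$ is independent of $\{\mathbf{h}_{j'}: j' \in S \setminus T_i\}$, hence independent of the (already fixed) direction $\mathbf{u}_S^{T_i}$. For any fixed nonzero $\mathbf{u} \in \mathbb{F}_q^L$, the number of $\mathbf{h} \in \mathbb{F}_q^L$ with $\mathbf{u} \cdot \mathbf{h}=0$ is $q^{L-1}$, so
\begin{equation*}
\Pr\bigl[\mathbf{u}_S^{T_i} \perp \mathbf{h}_j \bigm| \mathbf{u}_S^{T_i}\bigr] \;=\; \frac{1}{q}.
\end{equation*}
A union bound over the $t+1$ indices $j \in T_i$ shows that the probability that $\mathbf{u}_S^{T_i}$ fails some required non-perpendicularity is at most $(t+1)/q$, plus the $O(1/q)$ contribution from the bad event in the previous paragraph.

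Finally I would union-bound over all choices of $(S,T_i)$. There are $\binom{K}{t+L}$ choices of $S$ and, for each, $\binom{t+L}{t+1}$ choices of $T_i$. Summing,
\begin{equation*}
\Pr[\text{some }\mathbf{u}_S^{T_i}\text{ violates (\ref{Eq_Vector_Constraints})}] \;\lesssim\; \frac{(t+1)\binom{K}{t+L}\binom{t+L}{t+1}}{q},
\end{equation*}
which tends to zero under the hypothesis $q \gg (t+1)\binom{K}{t+L}\binom{t+L}{t+1}$, completing the argument. The only mild subtlety — and the step I would be most careful about — is the independence claim used above: one must verify that when $\mathbf{u}_S^{T_i}$ is produced as (say) a deterministic function of $\{\mathbf{h}_{j'}: j' \in S \setminus T_i\}$, its law is independent of $\mathbf{h}_j$ for $j \in T_i$, so that the conditional probability $1/q$ applies verbatim. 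Everything else is routine combinatorial counting and a union bound.
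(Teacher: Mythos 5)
Your proposal is correct and follows essentially the same route as the paper: a union bound over the $(t+1){K \choose t+L}{t+L \choose t+1}$ non-orthogonality constraints, each of which fails with probability $1/q$, while the orthogonality constraints are always satisfiable since they impose only $L-1$ linear conditions in an $L$-dimensional space. You are in fact slightly more careful than the paper, making explicit the independence of $\mathbf{h}_j$ for $j\in T_i$ from the vectors that determine $\mathbf{u}_{S}^{T_i}$ and accounting for the $O(1/q)$ event where those $L-1$ vectors are linearly dependent, neither of which changes the conclusion.
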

\begin{proof}
First, since the set $S \backslash T$ has $L-1$ elements, we require $\mathbf{u}_{S}^{T_i}$ to be orthogonal to $L-1$ arbitrary vectors, which is feasible in an $L$ dimensional space of any field size.

Second, the total number of non-orthogonality constraints in (\ref{Eq_Vector_Constraints}) for all possible subsets $S$ is $(t+1) {K \choose t+L}{t+L \choose t+1}$. On the other hand, it can be easily verified that the probability that two uniformly chosen random vectors in $\mathbb{F}_q$ are orthogonal is $1/q$. Thus, by using the union bound, the probability that at least one non-orthogonality constraint in (\ref{Eq_Vector_Constraints}) is violated is upper bounded by
\begin{eqnarray}
\nonumber \frac{(t+1) {K \choose t+L}{t+L \choose t+1}}{q} \ll 1,
\end{eqnarray}
which concludes the proof.
\end{proof}

For each $T_i$ define:
\begin{equation}\label{Eq_Linear_Proof_G_T}
G(T_i)=L_{r \in T_i}\left(W_{d_r,T_i \backslash \{r\}}^j\right),
\end{equation}
where $W_{d_r,T_i \backslash \{r\}}^j$ is a mini-file which is available in the cache of all users in $T_i$, except $r$, and is required by user $r$. Also $L_{r \in T_i}$ represents a random linear combination of the corresponding mini-files for all $r \in T_i$. Note that the index $j$ is chosen such that such mini-files have not been observed in the previous $(t+L)$-subsets. Thus, if we define $N(r,T \backslash \{r\})$ as the index of the next fresh mini-file required by user $r$, which is present in the cache of users $T \backslash \{r\}$, then we can rewrite:
\begin{equation}\label{Eq_Linear_General_GTi_1}
G(T_i)=L_{r \in T_i}\left(W_{d_r,T_i \backslash \{r\}}^{N(r,T_i \backslash \{r\})}\right),
\end{equation} 
Subsequently, we make the following definition for such $(t+L)$-subset $S$:
\begin{equation}\label{Eq_Linear_Proof_X_S}
\mathbf{X}(S)=\sum_{T \subseteq S, |T|=t+1}{\mathbf{u}_{S}^{T}G(T)}.
\end{equation}
We repeat the above procedure ${t+L-1 \choose t}$ times for the given $(t+L)$-subset $S$ in order to derive different independent versions of $\mathbf{X}_\omega(S), \omega=1,\dots,{t+L-1 \choose t}$. In other words, $\mathbf{X}_\omega(S)$'s only differ in the random coefficients chosen for calculating the linear combinations in (\ref{Eq_Linear_General_GTi_1}), which makes them independent linear combinations of the corresponding mini-files, with high probability. Thus, to distinguish between these different versions notationally we define:
\begin{equation}\label{Eq_Linear_General_GTi_2}
G_\omega(T_i)=L_{r \in T_i}^\omega\left(W_{d_r,T_i \backslash \{r\}}^{N(r,T_i \backslash \{r\})}\right), \mathbf{X}_\omega(S)=\sum_{T \subseteq S, |T|=t+1}{\mathbf{u}_{S}^{T}G_\omega(T)}.
\end{equation} 

Subsequently, for this $(t+L)$-subset $S$, the servers transmit the block
\begin{equation}
\left[\mathbf{X}_1(S),\dots,\mathbf{X}_{{t+L-1 \choose t}}(S)\right], 
\end{equation}
and we update $N(r, T \backslash \{r\})$ for those mini-files which have appeared in the linear combinations in (\ref{Eq_Linear_General_GTi_1}). When the above procedure for this specific subset $S$ is completed, we consider another $(t+L)$-subset of users and do the above procedure for that subset, and repeat this process until all $(t+L)$-subsets of $[K]$ have been taken into account.

Next, let us calculate the coding delay of this scheme, after which we prove the correctness of this content delivery strategy. For a fixed $(t+L)$-subset $S$ each $\mathbf{X}_\omega(S)$ is a $L$-by-$\frac{F/m}{{K \choose t} {K-t-1 \choose L-1}}$ block of symbols. Thus, the transmit block for $S$, i.e. $\left[\mathbf{X}_1(S),\dots,\mathbf{X}_{{t+L-1 \choose t}}(S)\right]$, is a $L$-by-$\frac{F/m}{{K \choose t} {K-t-1 \choose L-1}} {t+L-1 \choose t}$ block. Since this transmission should be repeated for all ${K \choose t+L}$ $(t+L)$-subsets of users, the whole transmit block size will be 
\begin{eqnarray}
\nonumber L \mathrm{-by-} \frac{ {t+L-1 \choose t}}{{K \choose t} {K-t-1 \choose L-1}} {K \choose t+L}\frac{F}{m}= L \mathrm{-by-} \frac{K(1-M/N)}{L+MK/N} \frac{F}{m},
\end{eqnarray}
which will result in the coding delay of
\begin{equation}
T_C=\frac{K(1-M/N)}{L+MK/N}\frac{F}{m}
\end{equation} 
time slots. Algorithm 2 shows the pseudo-code of the aforementioned procedure for linear networks. \\

\textbf{Correctness Proof}: Suppose the user $k$, who is interested in acquiring the file $W_{d_k}$. This file is partitioned into two parts: 1- The part already cached in this user at the first phase and constitutes of sub-files:
\begin{eqnarray}
\left(W_{d_k,\tau}: \tau \subseteq [K], |\tau|=t, k \in \tau\right).
\end{eqnarray}
2- Those parts which should be delivered to this user through the content delivery strategy, which constitutes of sub-files:
\begin{eqnarray}
\left(W_{d_k,\tau}: \tau \subseteq [K], |\tau|=t, k \not \in \tau\right).
\end{eqnarray}
Thus, since due to the following Lemma \ref{Lem_Linear_Proof_2}, the sub-files in the second category are successfully delivered to this user through the content delivery strategy, this user will decode the requested file. Moreover, since this user was arbitrarily chosen, all users will similarly decode their requested files. 

Before proving Lemma \ref{Lem_Linear_Proof_2} we need another lemma which is proved first:
\begin{lem}\label{Lem_Linear_Proof_1}
Suppose an arbitrary subset $T \subseteq [K]$ such that $|T|=t+1$, and $k \in T$. Then, through the above content placement and delivery strategy, user $k$ will be able to decode the sub-file $W_{d_k, T \backslash \{k\}}$.
\end{lem}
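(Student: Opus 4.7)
The plan is to show that for each fixed $(t+1)$-subset $T \ni k$, every mini-file comprising the sub-file $W_{d_k,T \backslash \{k\}}$ is recoverable by user $k$. Since this sub-file consists of ${K-t-1 \choose L-1}$ mini-files, and the number of $(t+L)$-supersets $S$ of $T$ inside $[K]$ is precisely ${K-t-1 \choose L-1}$, the aim is to prove that each such $S$ contributes exactly one fresh mini-file of $W_{d_k,T \backslash \{k\}}$ to user $k$'s knowledge, after which the freshness rule on the index $j$ rules out any overlap across distinct $S$'s.

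Fix $S \supseteq T$ with $|S|=t+L$. User $k$ observes the $k$-th row of $\mathbf{H}\mathbf{X}_\omega(S)$, i.e.\ $\mathbf{h}_k^t \mathbf{X}_\omega(S)$, for each $\omega = 1,\dots,{t+L-1 \choose t}$. By the orthogonality enforced in (\ref{Eq_Vector_Constraints}), whenever $k \in S \setminus T'$ the vector $\mathbf{u}_S^{T'}$ is orthogonal to $\mathbf{h}_k$, so all such terms in (\ref{Eq_Linear_General_GTi_2}) disappear. What remains is
\begin{equation}
\nonumber \mathbf{h}_k^t \mathbf{X}_\omega(S) \;=\; \sum_{\substack{T' \subseteq S,\, |T'|=t+1 \\ k \in T'}} \bigl(\mathbf{h}_k^t \mathbf{u}_S^{T'}\bigr)\, G_\omega(T'),
\end{equation}
with each scalar $\mathbf{h}_k^t \mathbf{u}_S^{T'}$ nonzero thanks to the non-orthogonality clause of (\ref{Eq_Vector_Constraints}).

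Inside each $G_\omega(T') = L^\omega_{r \in T'}\bigl(W_{d_r,T' \backslash \{r\}}^{j(S)}\bigr)$, every summand with $r \neq k$ is a mini-file whose subset-index $T' \backslash \{r\}$ contains $k$ and is therefore already in user $k$'s cache from the placement phase. After subtracting these known contributions, user $k$ retains
\begin{equation}
\nonumber \sum_{\substack{T' \ni k \\ T' \subseteq S,\, |T'|=t+1}} c_{T'}\, a^\omega_{T'}\, W_{d_k,\, T' \backslash \{k\}}^{j(S)},
\end{equation}
where $c_{T'} = \mathbf{h}_k^t \mathbf{u}_S^{T'}$ depends only on $T'$ while $a^\omega_{T'}$ is the random coefficient attached to the $r = k$ term inside $L^\omega$. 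Varying $\omega$ from $1$ to ${t+L-1 \choose t}$ yields a square linear system of that same dimension in the unknowns $\{W_{d_k,T'\backslash\{k\}}^{j(S)} : T' \ni k,\, T' \subseteq S\}$; since the $a^\omega_{T'}$ are i.i.d.\ uniform in $\mathbb{F}_q$ and each $c_{T'}$ is nonzero, the coefficient matrix is non-singular with high probability for large enough~$q$.

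Solving this system extracts, in particular, the single mini-file $W_{d_k, T \backslash \{k\}}^{j(S)}$ associated with the chosen $T$. Ranging $S$ over all ${K-t-1 \choose L-1}$ supersets of $T$ and invoking the freshness rule for $j(S)$ then collects precisely the ${K-t-1 \choose L-1}$ distinct mini-files that make up $W_{d_k, T \backslash \{k\}}$, completing the decoding of that sub-file. The main obstacle I anticipate is the invertibility of the coefficient matrices in the third step: one must apply a union bound, in the spirit of the field-size lemma preceding this proof, over all triples $(k, S, T)$ so that a single choice of $q$ guarantees simultaneous non-singularity of every such matrix with high probability.
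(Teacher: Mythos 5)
Your proposal is correct and follows essentially the same argument as the paper's proof: orthogonality of $\mathbf{u}_S^{T'}$ eliminates the subsets not containing $k$, cached mini-files remove the intra-combination interference, the ${t+L-1 \choose t}$ independent combinations (one per $\omega$) are solved to recover all mini-files $W_{d_k,T'\backslash\{k\}}^{j(S)}$ with $k\in T'\subseteq S$, and ranging over the ${K-t-1 \choose L-1}$ supersets $S$ of $T$ together with the freshness index yields the full sub-file. Your added remark about a union bound guaranteeing simultaneous invertibility of the coefficient matrices is a reasonable sharpening of the paper's "independent with high probability" statement, but does not change the substance of the argument.
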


\begin{proof}
Consider those transmissions which are assigned to the $(t+L)$-subsets which contain $T$. There exist ${K-t-1 \choose L-1}$ of such subsets. Let us focus on one of them, namely $S$. Corresponding to $S$, the following transmit block is sent by the servers:
\begin{eqnarray}
\left[\mathbf{X}_1(S),\dots,\mathbf{X}_{{t+L-1 \choose t}}(S)\right],
\end{eqnarray}
and subsequently, user $k$ receives:
\begin{eqnarray}\label{Eq_Linear_Proof_Recieve_k_1}
\mathbf{h}_k .\left[\mathbf{X}_1(S),\dots,\mathbf{X}_{{t+L-1 \choose t}}(S)\right].
\end{eqnarray}
Let's focus on $\mathbf{h}_k .\mathbf{X}_1(S)$:
\begin{eqnarray}\label{Eq_Linear_Proof_Recieve_k_2}
\nonumber \mathbf{h}_k .\mathbf{X}_1(S)&\stackrel{(a)}=&\mathbf{h}_k . \sum_{T \subseteq S, |T|=t+1}{\mathbf{u}_{S}^{T}G_1(T)} \\ \nonumber
&\stackrel{(b)}=&\sum_{T \subseteq S, |T|=t+1, k \in T}{\left(\mathbf{h}_k .\mathbf{u}_{S}^{T}\right)G_1(T)} \\ 
&\stackrel{(c)}=&\sum_{T \subseteq S, |T|=t+1, k \in T}{\left(\mathbf{h}_k .\mathbf{u}_{S}^{T}\right)L_{r \in T}^1(W_{d_r,T \backslash \{r\}}^j)},
\end{eqnarray}
where (a) follows from (\ref{Eq_Linear_Proof_X_S}), (b) follows from the fact that 
\begin{eqnarray}
\mathbf{u}_{S}^{T} &\perp& \mathbf{h}_k \hspace{5mm} \mathrm{for \hspace{2mm} all} \hspace{5mm} k\in S \backslash T,
\end{eqnarray}
and (c) is due to (\ref{Eq_Linear_Proof_G_T}). In (\ref{Eq_Linear_Proof_Recieve_k_2}), user $k$ can extract $W_{d_k,T \backslash \{k\}}^j$ from the linear combination $L_{r \in T}^1(W_{d_r,T \backslash \{r\}}^j)$, since all the other interference terms are present at his cache. Thus, by removing interference terms, user $k$ can carve the following linear combination from (\ref{Eq_Linear_Proof_Recieve_k_2}):
\begin{eqnarray}
\nonumber L_{T \subseteq S, |T|=t+1, k \in T}^1\left(W_{d_k,T \backslash \{k\}}^j\right),
\end{eqnarray}
which is a random linear combination of ${t+L-1 \choose t}$ mini-files desired by user $k$. However, since in (\ref{Eq_Linear_Proof_Recieve_k_1}) user $k$ receives ${t+L-1 \choose t}$ independent random linear combinations of these mini-files, he can recover the whole set of mini-files:
\begin{eqnarray}
\nonumber \left(W_{d_k,T \backslash \{k\}}^j: T \subseteq S, |T|=t+1, k \in T\right).
\end{eqnarray}
Thus, for the $T$ specified in this lemma, he can recover the mini-file $W_{d_k,T \backslash \{k\}}^j$. Now, since there exist a total of ${K-t-1 \choose L-1}$ $(t+L)$-subsets containing this specific $T$, by considering the transmissions corresponding to each, this user will recover ${K-t-1 \choose L-1}$ \emph{distinct} mini-files of form $W_{d_k,T \backslash \{k\}}^j$. The distinctness is guaranteed by the appropriate updating of the index $N(\cdot,\cdot)$. These mini-files will recover the sub-file $W_{d_k,T \backslash \{k\}}$ and the proof is concluded.

\end{proof}

\begin{lem}\label{Lem_Linear_Proof_2}
Through the above content delivery strategy an arbitrary user $k$ will be able to decode all the sub-files:
\begin{eqnarray}
\left(W_{d_k,\tau}: \tau \subseteq [K], |\tau|=t, k \not \in \tau\right).
\end{eqnarray}
\end{lem}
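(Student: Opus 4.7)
The plan is to reduce Lemma \ref{Lem_Linear_Proof_2} to a direct application of Lemma \ref{Lem_Linear_Proof_1} via a natural bijection between the two indexing sets. Specifically, every $t$-subset $\tau \subseteq [K]$ with $k \notin \tau$ corresponds to the $(t+1)$-subset $T = \tau \cup \{k\}$, which satisfies $|T|=t+1$ and $k \in T$. This mapping is a bijection between $\{\tau \subseteq [K] : |\tau|=t,\, k \notin \tau\}$ and $\{T \subseteq [K] : |T|=t+1,\, k \in T\}$, and under this bijection $\tau = T \backslash \{k\}$, so that $W_{d_k,\tau} = W_{d_k, T \backslash \{k\}}$.

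First, I would fix an arbitrary $\tau$ from the set in the statement of the lemma and set $T = \tau \cup \{k\}$. Then I would invoke Lemma \ref{Lem_Linear_Proof_1} with this $T$, which guarantees that user $k$ decodes $W_{d_k, T \backslash \{k\}}$; by the identification above this mini-file collection assembles to $W_{d_k, \tau}$. Ranging over all choices of $\tau$ exhausts, by the bijection, every $(t+1)$-subset of $[K]$ containing $k$, and thus yields every sub-file listed in the statement of Lemma \ref{Lem_Linear_Proof_2}.

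I do not expect any genuine obstacle at this step, since Lemma \ref{Lem_Linear_Proof_1} has already absorbed the hard work: the orthogonality constraints on $\mathbf{u}_{S}^{T}$ that cancel interference from users outside $T$, the subtraction of cached interference terms inside $T$, the linear independence (with high probability) of the ${t+L-1 \choose t}$ received combinations $\mathbf{h}_k \cdot \mathbf{X}_{\omega}(S)$, and the bookkeeping via $N(\cdot,\cdot)$ that delivers distinct fresh mini-files across the ${K-t-1 \choose L-1}$ different $(t+L)$-supersets of each $T$. The only point worth remarking is that once these sub-files are combined with the sub-files $(W_{d_k,\tau}: |\tau|=t,\, k \in \tau)$ that user $k$ cached during the placement phase, the union covers all ${K \choose t}$ sub-files of $W_{d_k}$, completing the correctness argument for the coding strategy on linear networks.
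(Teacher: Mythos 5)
Your proposal is correct and matches the paper's own proof of Lemma \ref{Lem_Linear_Proof_2}: both fix an arbitrary $\tau$ with $|\tau|=t$, $k \notin \tau$, set $T = \tau \cup \{k\}$, and apply Lemma \ref{Lem_Linear_Proof_1} to conclude that user $k$ decodes $W_{d_k,\tau}$. The additional remarks about the bijection and the assembly of all sub-files are sound but not needed beyond what the paper states.
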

\begin{proof}
Consider an arbitrary $\tau \subseteq [K]$ such that $|\tau|=t, k \not \in \tau$. Define $T=\tau \cup \{k\}$. Then, since to Lemma \ref{Lem_Linear_Proof_1}, user $k$ is able to decode $W_{d_k,\tau}$. Since $\tau$ was chosen arbitrarily, the proof is complete. 
\end{proof}

\begin{algorithm}\label{Alg_Main}
\caption{Multi-Server Coded Caching - Linear Networks}
\begin{algorithmic}[1]
\Procedure{PLACEMENT}{$W_1,\dots,W_N$}
\State $t \gets MK/N$
\ForAll{$n \in [N]$}
\State split $W_n$ into $(W_{n,\tau}: \tau \subset [K], |\tau|=t)$ of equal size
\ForAll{$\tau \subset [K], |\tau|=t$}
\State split $W_{n,\tau}$ into $(W_{n,\tau}^j: j=1,\dots,{K-t-1 \choose L-1})$ of equal size
\EndFor
\EndFor
\ForAll{$k \in [K]$}
\State $Z_k \gets (W_{n,\tau}^j: \tau \subset [K], |\tau|=t, k \in \tau, j=1,\dots,{K-t-1 \choose L-1}, n \in [N])$
\EndFor
\EndProcedure
\\
\Procedure{DELIVERY}{$W_1,\dots,W_N$, $d_1,\dots,d_K$}
\State $t \gets MK/N$
\ForAll{$T \subseteq [K], |T|=t+1$}
\ForAll{$r \in T$}
\State $N(r,T\backslash\{r\}) \gets 1 $
\EndFor
\EndFor
\ForAll{$S \subseteq [K], |S|=t+L$}
\ForAll{$T \subseteq S, |T|=t+1$}
\State Design $\mathbf{u}_{S}^{T}$ such that: for all $j \in S$, $\mathbf{h}_j \perp  \mathbf{u}_{S}^{T}$ if $j \not\in T$ and $\mathbf{h}_j \not \perp  \mathbf{u}_{S}^{T}$ if $j \in T$

\EndFor

\ForAll {$\omega=1,\dots,{t+L-1 \choose t}$}
\ForAll{$T \subseteq S, |T|=t+1$}
\State $G_\omega(T) \gets L_{r \in T}^\omega\left( W_{{d_r},T\backslash\{r\}}^{N(r,T\backslash\{r\})}\right)$
\EndFor
\State $\mathbf{X}_\omega(S) \gets \sum_{T \subseteq S, |T|=t+1} {\mathbf{u}_{S}^{T} G_\omega(T)}$
\EndFor

\State \textbf{transmit} $\mathbf{X}(S)=\left[\mathbf{X}_1(S),\dots,\mathbf{X}_{{t+L-1 \choose t}}(S)\right]$

\ForAll{$T \subseteq S, |T|=t+1$}
\ForAll{$r \in T$}
\State $N(r,T\backslash\{r\}) \gets N(r,T\backslash\{r\}) + 1$
\EndFor
\EndFor

\EndFor
\EndProcedure

\end{algorithmic}
\end{algorithm}

\section{Conclusions}\label{Sec_Conclusions}

In this paper, we investigated coded caching in a multi-server network where servers are connected to multiple cache-enabled clients. Based on the topology of the network, we defined three types of networks, namely, dedicated, flexible, and linear networks. In dedicated and flexible networks, we assume that the internal nodes are aware of the network topology, and accordingly route the data. In linear networks, we assume no topology knowledge at internal nodes, and thus, internal nodes perform random linear network coding. We have shown that knowledge of type of network topology plays a key role in design of proper caching mechanisms in such networks. Our results show that all network types can benefit from both caching and multiplexing gains. In fact, in dedicated and linear networks the global caching and multiplexing gains appear in additive form. However, in flexible networks they appear in multiplicative form, leading to an order-optimal solution in terms of coding delay. 

\newpage

\bibliographystyle{ieeetr}

\newpage

\section*{Appendix A: Converse Proof}\label{Appendix_Converse}

\begin{figure}
\begin{center}
\includegraphics[width=0.6\textwidth]{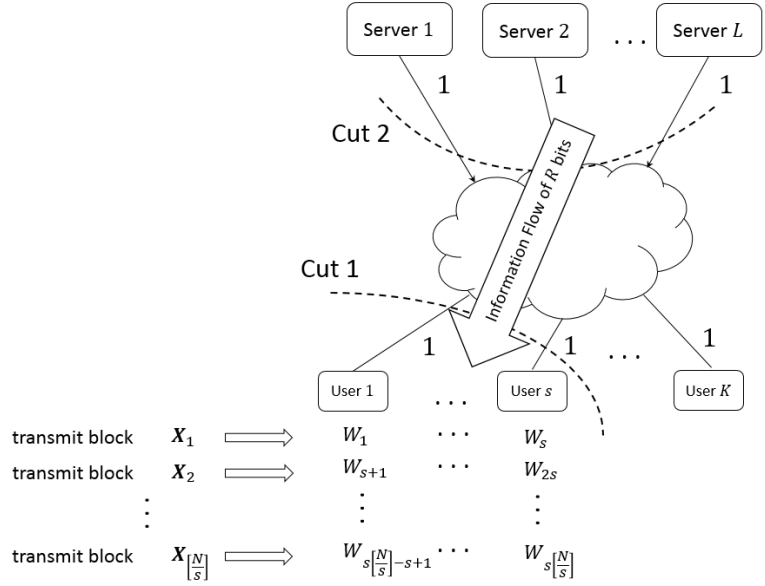}
\end{center}
\caption{Converse Proof. \label{Fig_Converse_Proof}}
\end{figure}

The proof is similar to the cut-set method presented in \cite{Maddah-Ali_Fundamental_2014}. See Fig. \ref{Fig_Converse_Proof} and let us concentrate on the first $s$ users. Define $\mathbf{X}_1$ to be the transmit block sent by the servers such that these users, with the help of their cache contents $Z_1,\dots,Z_s$, will be able to decode $W_1,\dots,W_s$. Also, define $\mathbf{X}_2$ to be the block which enables the users to decode $W_{s+1},\dots,W_{2s}$, and continue the same process such that $\mathbf{X}_{\lfloor N/s \rfloor}$ is the block which enables the users to decode $W_{s\lfloor N/s \rfloor -s+1},\dots,W_{s\lfloor N/s \rfloor}$. Also, define $R$ to be the maximum information needed to pass through the two cuts shown in the figure, by each transmit block transmission. Then we will have:
\begin{eqnarray}
\nonumber s\lfloor \frac{N}{s} \rfloor F \leq \lfloor \frac{N}{s} \rfloor R+sMF,
\end{eqnarray}
which will result in 
\begin{eqnarray}
\nonumber R \geq \left(s-\frac{s}{ \lfloor \frac{N}{s} \rfloor}M\right)F.
\end{eqnarray}
However, we have:
\begin{eqnarray}
\nonumber	D^*(M) &\geq& \frac{R}{\mathrm{min-cut}} \\ \nonumber
				&\geq&  \frac{R}{\min(s,L)m} \\ 
				&\geq& \frac{1}{\min(s,L)} \left(s-\frac{s}{ \lfloor \frac{N}{s} \rfloor}M\right)\frac{F}{m}.
\end{eqnarray}
Now we can maximize on the free parameter $s$ to arrive at the tightest bound, which concludes the proof.

\section*{Appendix B: Details of Example \ref{Examp_Linear_2} ($L=2,N=4,K=4$)}\label{Appendix_Two_Servers}

In this appendix, we consider the scenario in Example \ref{Examp_Linear_2} for the case of two servers. For each memory size $M=0,\dots,4$, we present the scheme which achieves the coding delay as stated in Example \ref{Examp_Linear_2}.
\begin{itemize}
\item $M=0$

In this case, we do not have any cache space available at the users. Suppose we divide each file into three equal-sized non-overlapping parts:
\begin{eqnarray}
\nonumber A&=&[A^1,A^2,A^3] \\ \nonumber
B&=&[B^1,B^2,B^3] \\ \nonumber
C&=&[C^1,C^2,C^3] \\ \nonumber
D&=&[D^1,D^2,D^3].
\end{eqnarray}
Then, the servers transmit the following blocks, in sequence:
\begin{eqnarray}\label{Eq_L2_K4_N4_M0_transmit}
\nonumber \mathbf{X}(\{1,2\})&=& \mathbf{h}_1^{\perp}B^1+\mathbf{h}_2^{\perp}A^1 \\ \nonumber
\mathbf{X}(\{1,3\})&=& \mathbf{h}_1^{\perp}C^1+\mathbf{h}_3^{\perp}A^2 \\ \nonumber
\mathbf{X}(\{1,4\})&=& \mathbf{h}_1^{\perp}D^1+\mathbf{h}_4^{\perp}A^3 \\ \nonumber
\mathbf{X}(\{2,3\})&=& \mathbf{h}_2^{\perp}C^2+\mathbf{h}_3^{\perp}B^2 \\ \nonumber
\mathbf{X}(\{2,4\})&=& \mathbf{h}_2^{\perp}D^2+\mathbf{h}_4^{\perp}B^3 \\ 
\mathbf{X}(\{3,4\})&=& \mathbf{h}_3^{\perp}D^3+\mathbf{h}_4^{\perp}C^3.
\end{eqnarray}
Let's focus on the first user which receives:
\begin{eqnarray}
\nonumber \mathbf{h}_1.\mathbf{X}(\{1,2\})&=& (\mathbf{h}_1.\mathbf{h}_2^{\perp})A^1 \\ \nonumber
\mathbf{h}_1.\mathbf{X}(\{1,3\})&=& (\mathbf{h}_1.\mathbf{h}_3^{\perp})A^2 \\ \nonumber
\mathbf{h}_1.\mathbf{X}(\{1,4\})&=& (\mathbf{h}_1.\mathbf{h}_4^{\perp})A^3. 
\end{eqnarray}
From the above data, this user can recover the whole file $A$. Similarly, other users can decode their requested files.

The transmission stated in (\ref{Eq_L2_K4_N4_M0_transmit}) consists of six blocks of size $2$-by-$\frac{F}{3m}$, resulting in a coding delay of $T_C=6 \frac{F}{3m}=2\frac{F}{m}$.

\item $M=1$

Consider the cache content placement used in \cite{Maddah-Ali_Fundamental_2014}: First divide each file into $4$ equal-sized non-overlapping sub-files:
\begin{eqnarray}
\nonumber A&=&[A_1,A_2,A_3,A_4] \\ \nonumber
B&=&[B_1,B_2,B_3,B_4] \\ \nonumber
C&=&[C_1,C_2,C_3,C_4] \\ \nonumber
D&=&[D_1,D_2,D_3,D_4],
\end{eqnarray}
and then, fill the caches as follows:
\begin{eqnarray}
\nonumber Z_1&=&[A_1,B_1,C_1,D_1] \\ \nonumber
Z_2&=&[A_2,B_2,C_2,D_2] \\ \nonumber
Z_3&=&[A_3,B_3,C_3,D_3] \\ \nonumber
Z_4&=&[A_4,B_4,C_4,D_4].
\end{eqnarray}
Such placement respects the memory constraint of $M=1$. Also, divide each sub-file into two equal parts of size $\frac{1}{2}\frac{F}{4}=\frac{F}{8}$ bits:
\begin{eqnarray}
\nonumber A_i&=&[A_i^1,A_i^2],  \\ \nonumber
B_i&=&[B_i^1,B_i^2],  \\ \nonumber
C_i&=&[C_i^1,C_i^2],  \\ \nonumber
D_i&=&[D_i^1,D_i^2], 
\end{eqnarray} 
where $i=1,2,3,4$. In the second phase, we send the following blocks of size $2$-by-$\frac{F}{4}$ bits:
\begin{eqnarray}\label{Eq_L2_K4_N4_M1_transmit}
\nonumber \mathbf{X}(\{1,2,3\})&=&[\mathbf{h}_1^{\perp} L_{\{2,3\}}^1(B_3^1,C_2^1)+\mathbf{h}_2^{\perp} L_{\{1,3\}}^1(A_3^1,C_1^1)+\mathbf{h}_3^{\perp} L_{\{1,2\}}^1(A_2^1,B_1^1), \\ \nonumber
& &\mathbf{h}_1^{\perp} L_{\{2,3\}}^2(B_3^1,C_2^1)+\mathbf{h}_2^{\perp} L_{\{1,3\}}^2(A_3^1,C_1^1)+\mathbf{h}_3^{\perp} L_{\{1,2\}}^2(A_2^1,B_1^1)] \\ \nonumber
\mathbf{X}(\{1,2,4\})&=&[\mathbf{h}_1^{\perp} L_{\{2,4\}}^1(B_4^1,D_2^1)+\mathbf{h}_2^{\perp} L_{\{1.4\}}^1(A_4^1,D_1^1)+\mathbf{h}_4^{\perp} L_{\{1,2\}}^1(A_2^2,B_1^2), \\ \nonumber
& &\mathbf{h}_1^{\perp} L_{\{2,4\}}^2(B_4^1,D_2^1)+\mathbf{h}_2^{\perp} L_{\{1,4\}}^2(A_4^1,D_1^1)+\mathbf{h}_4^{\perp} L_{\{1,2\}}^2(A_2^2,B_1^2)] \\ \nonumber
\mathbf{X}(\{1,3,4\})&=&[\mathbf{h}_1^{\perp}L_{\{3,4\}}^1(C_4^1,D_3^1)+\mathbf{h}_3^{\perp}L_{\{1,4\}}^1(A_4^2,D_1^2)+\mathbf{h}_4^{\perp}L_{\{1,3\}}^1(A_3^2,C_1^2), \\ \nonumber
& &\mathbf{h}_1^{\perp}L_{\{3,4\}}^2(C_4^1,D_3^1)+\mathbf{h}_3^{\perp}L_{\{1,4\}}^2(A_4^2,D_1^2)+\mathbf{h}_4^{\perp}L_{\{1,3\}}^2(A_3^2,C_1^2)] \\ \nonumber
\mathbf{X}(\{2,3,4\})&=&[\mathbf{h}_2^{\perp}L_{\{3,4\}}^1(C_4^2,D_3^2)+\mathbf{h}_3^{\perp}L_{\{2,4\}}^1(B_4^2,D_2^2)+\mathbf{h}_4^{\perp}L_{\{2,3\}}^1(B_3^2,C_2^2), \\ \nonumber
& &\mathbf{h}_2^{\perp}L_{\{3,4\}}^2(C_4^2,D_3^2)+\mathbf{h}_3^{\perp}L_{\{2,4\}}^2(B_4^2,D_2^2)+\mathbf{h}_4^{\perp}L_{\{2,3\}}^2(B_3^2,C_2^2)]. \\
\end{eqnarray}
Let's focus on the first user. From the above transmissions he recovers:
\begin{eqnarray}
\nonumber \mathbf{h}_1 . \mathbf{X}(\{1,2,3\})&=&[(\mathbf{h}_1.\mathbf{h}_2^{\perp}) L_{\{1,3\}}^1(A_3^1,C_1^1)+(\mathbf{h}_1.\mathbf{h}_3^{\perp}) L_{\{1,2\}}^1(A_2^1,B_1^1), \\ \nonumber
& &(\mathbf{h}_1.\mathbf{h}_2^{\perp}) L_{\{1,3\}}^2(A_3^1,C_1^1)+(\mathbf{h}_1.\mathbf{h}_3^{\perp}) L_{\{1,2\}}^2(A_2^1,B_1^1)] \\ \nonumber
&=& [L^1(A_3^1,C_1^1,A_2^1,B_1^1),L^2(A_3^1,C_1^1,A_2^1,B_1^1)] \\ \nonumber
\mathbf{h}_1 . \mathbf{X}(\{1,2,4\})&=&[(\mathbf{h}_1.\mathbf{h}_2^{\perp}) L_{\{1,4\}}^1(A_4^1,D_1^1)+(\mathbf{h}_1.\mathbf{h}_4^{\perp}) L_{\{1,2\}}^1(A_2^2,B_1^2), \\ \nonumber
& &(\mathbf{h}_1.\mathbf{h}_2^{\perp}) L_{\{1,4\}}^2(A_4^1,D_1^1)+(\mathbf{h}_1.\mathbf{h}_4^{\perp}) L_{\{1,2\}}^2(A_2^2,B_1^2)] \\ \nonumber
&=&[L^1(A_4^1,D_1^1,A_2^2,B_1^2),L^2(A_4^1,D_1^1,A_2^2,B_1^2)] \\ \nonumber
\mathbf{h}_1 .\mathbf{X}(\{1,3,4\})&=&[(\mathbf{h}_1 .\mathbf{h}_3^{\perp})L_{\{1,4\}}^1(A_4^2,D_1^2)+(\mathbf{h}_1 .\mathbf{h}_4^{\perp})L_{\{1,3\}}^1(A_3^2,C_1^2), \\ \nonumber
& &(\mathbf{h}_1 .\mathbf{h}_3^{\perp})L_{\{1,4\}}^2(A_4^2,D_1^2)+(\mathbf{h}_1 .\mathbf{h}_4^{\perp})L_{\{1,3\}}^2(A_3^2,C_1^2)] \\ \nonumber
&=& [L^1(A_4^2,D_1^2,A_3^2,C_1^2),L^2(A_4^2,D_1^2,A_3^2,C_1^2)]. \\
\end{eqnarray}
(Although user 1 also receives $\mathbf{h}_1 .\mathbf{X}(\{2,3,4\})$, such information is of no value to him.)
With the help of its cache contents the first user can eliminate the undesired terms and obtain:
\begin{eqnarray}
\nonumber & &[L(A_3^1,A_2^1),L'(A_3^1,A_2^1)] \rightarrow A_3^1,A_2^1 \\ \nonumber
& &[L(A_4^1,A_2^2),L'(A_4^1,A_2^2)] \rightarrow A_4^1,A_2^2 \\ \nonumber
& & [L(A_4^2,A_3^2),L'(A_4^2,A_3^2)] \rightarrow A_4^2,A_3^2.
\end{eqnarray}
Since $A_1^1$ and $A_1^2$ is already available in first user's cache location, he can subsequently recover the whole block $A$. Similarly, all other users can recover their requested files.

The transmission scheme adopted in (\ref{Eq_L2_K4_N4_M1_transmit}) consists of four $2$-by-$\frac{F}{4m}$ blocks which will result in the coding delay $T_C=4 \frac{F}{4m}=\frac{F}{m}$ time slots.

\item $M=2$

Consider the cache content placement used in \cite{Maddah-Ali_Fundamental_2014}: First divide each file into $6$ equal-sized non-overlapping sub-files:
\begin{eqnarray}
\nonumber A&=&[A_1,A_2,A_3,A_4,A_5,A_6] \\ \nonumber
B&=&[B_1,B_2,B_3,B_4,B_5,B_6] \\ \nonumber
C&=&[C_1,C_2,C_3,C_4,C_5,C_6] \\ \nonumber
D&=&[D_1,D_2,D_3,D_4,D_5,D_6],
\end{eqnarray}
and then, fill the caches as follows:
\begin{eqnarray}
\nonumber Z_1&=&[A_1,A_2,A_3,B_1,B_2,B_3,C_1,C_2,C_3,D_1,D_2,D_3] \\ \nonumber
Z_2&=&[A_1,A_4,A_5,B_1,B_4,B_5,C_1,C_4,C_5,D_1,D_4,D_5] \\ \nonumber
Z_3&=&[A_2,A_4,A_6,B_2,B_4,B_6,C_2,C_4,C_6,D_2,D_4,D_6] \\ \nonumber
Z_4&=&[A_3,A_5,A_6,B_3,B_5,B_6,C_3,C_5,C_6,D_3,D_5,D_6].
\end{eqnarray}
In the second phase, we send the following block of symbols of size $2$-by-$\frac{F}{2m}$:
\begin{eqnarray}\label{Eq_L2_K4_N4_M2_transmit}
\nonumber \mathbf{X}=[& &\mathbf{h}_1^{\perp} L_{\{2,3,4\}}^1(B_6,C_5,D_4)+\mathbf{h}_2^{\perp}L_{\{1,3,4\}}^1(A_6,C_3,D_2)+\mathbf{h}_3^{\perp}L_{\{1,2,4\}}^1(A_5,B_3,D_1)+\mathbf{h}_4^{\perp}L_{\{1,2,3\}}^1(A_4,B_2,C_1), \\ \nonumber
& &\mathbf{h}_1^{\perp} L_{\{2,3,4\}}^2(B_6,C_5,D_4)+\mathbf{h}_2^{\perp}L_{\{1,3,4\}}^2(A_6,C_3,D_2)+\mathbf{h}_3^{\perp}L_{\{1,2,4\}}^2(A_5,B_3,D_1)+\mathbf{h}_4^{\perp}L_{\{1,2,3\}}^2(A_4,B_2,C_1),\\ \nonumber
& &\mathbf{h}_1^{\perp} L_{\{2,3,4\}}^3(B_6,C_5,D_4)+\mathbf{h}_2^{\perp}L_{\{1,3,4\}}^3(A_6,C_3,D_2)+\mathbf{h}_3^{\perp}L_{\{1,2,4\}}^3(A_5,B_3,D_1)+\mathbf{h}_4^{\perp}L_{\{1,2,3\}}^3(A_4,B_2,C_1) \hspace{6mm}]. \\
\end{eqnarray}
Let's focus on the first user who receives:
\begin{eqnarray}
\nonumber [& &L^1(A_4,A_5,A_6,B_2,B_3,C_1,C_3,D_1,D_2), \\ \nonumber
& &L^2(A_4,A_5,A_6,B_2,B_3,C_1,C_3,D_1,D_2),\\ \nonumber
& &L^3(A_4,A_5,A_6,B_2,B_3,C_1,C_3,D_1,D_2) \hspace{6mm}].
\end{eqnarray}
This user also has the unwanted terms $B_2,B_3,C_1,C_3,D_1,D_2$ in his cache, and after removing them from above linear combinations he has three different linear combinations of its required terms $A_4$, $A_5$, and $A_6$. After solving these equations, and with the help of $A_1$, $A_2$, and $A_3$ stored in his cache,  he can recover the whole file $A$. Similarly the other users are able to decode their required files.

The transmit block stated in (\ref{Eq_L2_K4_N4_M2_transmit}) is of size $2$-by-$\frac{F}{2m}$ vector, resulting in $T_C=\frac{1}{2}\frac{F}{m}$ time slots.

\item $M=3$

In this case, by the scheme proposed in \cite{Maddah-Ali_Fundamental_2014}, all four users can get useful information through a single transmission from a single server. Thus, we cannot further reduce the delay by activating the other server. Thus, by activating just one server and based on \cite{Maddah-Ali_Fundamental_2014} a coding delay of $T_C=\frac{1}{4}\frac{F}{m}$ time slots is obtained.

\item $M=4$

In the case of $M=4$, all four files can be stored in the cache of each user, and the required delivery delay in the second phase is zero $T_C=0$.
\end{itemize}

\section*{Appendix C: Details of Example \ref{Examp_Linear_2} ($L=3,N=4,K=4$)}\label{Appendix_Three_Servers}

In this example, we consider the three server case in Example \ref{Examp_Linear_2}, and for all values of $M=0,\dots,4$ present the schemes that lead to achievable rates.

\begin{itemize}
\item $M=0$

In this case, we do not have any cache space available at the user locations. Suppose we divide each file into three equal-sized non-overlapping parts:
\begin{eqnarray}
\nonumber A&=&[A^1,A^2,A^3] \\ \nonumber
B&=&[B^1,B^2,B^3] \\ \nonumber
C&=&[C^1,C^2,C^3] \\ \nonumber
D&=&[D^1,D^2,D^3].
\end{eqnarray}
The three servers can then send the following $3$-by-$1$ vectors:
\begin{eqnarray}\label{Eq_L3_K4_N4_M0_transmit}
\nonumber \mathbf{X}(\{1,2,3\})&=& \mathbf{u}_{\{1,2,3\}}^{\{1\}}A^1+\mathbf{u}_{\{1,2,3\}}^{\{2\}}B^1+\mathbf{u}_{\{1,2,3\}}^{\{3\}}C^1\\ \nonumber
\mathbf{X}(\{1,2,4\})&=&  \mathbf{u}_{\{1,2,4\}}^{\{1\}}A^2+\mathbf{u}_{\{1,2,4\}}^{\{2\}}B^2+\mathbf{u}_{\{1,2,4\}}^{\{4\}}D^1\\ \nonumber
\mathbf{X}(\{1,3,4\})&=& \mathbf{u}_{\{1,3,4\}}^{\{1\}}A^3+\mathbf{u}_{\{1,3,4\}}^{\{3\}}C^2+\mathbf{u}_{\{1,3,4\}}^{\{4\}}D^2\\ 
\mathbf{X}(\{2,3,4\})&=& \mathbf{u}_{\{2,3,4\}}^{\{2\}}B^3+\mathbf{u}_{\{2,3,4\}}^{\{3\}}C^3+\mathbf{u}_{\{2,3,4\}}^{\{4\}}D^3,
\end{eqnarray}
where we require
\begin{eqnarray}
\nonumber &&\mathbf{u}_S^T \perp \mathbf{h}_j, \forall \mathbf{h}_j \in S \backslash T \\ 
&&\mathbf{u}_S^T \not \perp \mathbf{h}_j, \forall \mathbf{h}_j \in T.
\end{eqnarray}
In this example, since we have three dimensional transmit vectors (three servers) and $|S \backslash T|=2$, such vectors can be found.

Let's focus on the first user who receives:
\begin{eqnarray}
\nonumber \mathbf{h}_1.\mathbf{X}(\{1,2,3\})&=& \left(\mathbf{h}_1.\mathbf{u}_{\{1,2,3\}}^{\{1\}}\right)A^1+\left(\mathbf{h}_1.\mathbf{u}_{\{1,2,3\}}^{\{2\}}\right)B^1+\left(\mathbf{h}_1.\mathbf{u}_{\{1,2,3\}}^{\{3\}}\right)C^1=\left(\mathbf{h}_1.\mathbf{u}_{\{1,2,3\}}^{\{1\}}\right)A^1\\ \nonumber
\mathbf{h}_1.\mathbf{X}(\{1,2,4\})&=&  \left(\mathbf{h}_1.\mathbf{u}_{\{1,2,4\}}^{\{1\}}\right)A^2+\left(\mathbf{h}_1.\mathbf{u}_{\{1,2,4\}}^{\{2\}}\right)B^2+\left(\mathbf{h}_1.\mathbf{u}_{\{1,2,4\}}^{\{4\}}\right)D^1=\left(\mathbf{h}_1.\mathbf{u}_{\{1,2,4\}}^{\{1\}}\right)A^2\\ 
\mathbf{h}_1.\mathbf{X}(\{1,3,4\})&=& \left(\mathbf{h}_1.\mathbf{u}_{\{1,3,4\}}^{\{1\}}\right)A^3+\left(\mathbf{h}_1.\mathbf{u}_{\{1,3,4\}}^{\{3\}}\right)C^2+\left(\mathbf{h}_1.\mathbf{u}_{\{1,3,4\}}^{\{4\}}\right)D^2=\left(\mathbf{h}_1.\mathbf{u}_{\{1,3,4\}}^{\{1\}}\right)A^3.
\end{eqnarray}
The first user can then successfully decode its requested file. Similarly, the other users will also be able to decode their requested files. 

The transmission stated in (\ref{Eq_L3_K4_N4_M0_transmit}) consists of four $3$-by-$\frac{F}{3m}$ blocks, resulting in  $T_C=\frac{4F}{3m}$ time slots.

\item $M=1$

The cache content placement is the same as \cite{Maddah-Ali_Fundamental_2014}. Then, the transmit block by the three servers is:
\begin{equation}\label{Eq_L3_K4_N4_M1_transmit}
\mathbf{X}=[\mathbf{X}_1,\mathbf{X}_2, \mathbf{X}_3],
\end{equation}
where (for $\omega=1,2,3$)
\begin{eqnarray}
\nonumber \mathbf{X}_\omega&=&\mathbf{u}_{\{1,2,3,4\}}^{\{1,2\}}L_{\{1,2\}}^\omega(A_2,B_1)+\mathbf{u}_{\{1,2,3,4\}}^{\{1,3\}}L_{\{1,3\}}^\omega(A_3,C_1)+\mathbf{u}_{\{1,2,3,4\}}^{\{1,4\}}L_{\{1,4\}}^\omega(A_4,D_1) \\ 
&+&\mathbf{u}_{\{1,2,3,4\}}^{\{2,3\}}L_{\{2,3\}}^\omega(B_3,C_2)+\mathbf{u}_{\{1,2,3,4\}}^{\{2,4\}}L_{\{2,4\}}^\omega(B_4,D_2)+\mathbf{u}_{\{1,2,3,4\}}^{\{3,4\}}L_{\{3,4\}}^\omega(C_4,D_3).
\end{eqnarray}

Now let's focus on the first user who receives:
\begin{equation}
y_1=\mathbf{h}_1 . \mathbf{X}=[\mathbf{h}_1 .\mathbf{X}_1,\mathbf{h}_1 .\mathbf{X}_2, \mathbf{h}_1 .\mathbf{X}_3].
\end{equation}
Let's consider first the term:
\begin{eqnarray}
\nonumber \mathbf{h}_1 .\mathbf{X}_1&=&\left(\mathbf{h}_1 .\mathbf{u}_{\{1,2,3,4\}}^{\{1,2\}}\right)L_{\{1,2\}}^1(A_2,B_1)+\left(\mathbf{h}_1 .\mathbf{u}_{\{1,2,3,4\}}^{\{1,3\}}\right)L_{\{1,3\}}^1(A_3,C_1)+\left(\mathbf{h}_1 .\mathbf{u}_{\{1,2,3,4\}}^{\{1,4\}}\right)L_{\{1,4\}}^1(A_4,D_1) \\ \nonumber
&+&\left(\mathbf{h}_1 .\mathbf{u}_{\{1,2,3,4\}}^{\{2,3\}}\right)L_{\{2,3\}}^1(B_3,C_2)+\left(\mathbf{h}_1 .\mathbf{u}_{\{1,2,3,4\}}^{\{2,4\}}\right)L_{\{2,4\}}^1(B_4,D_2)+\left(\mathbf{h}_1 .\mathbf{u}_{\{1,2,3,4\}}^{\{3,4\}}\right)L_{\{3,4\}}^1(C_4,D_3)\\ \nonumber
&=&\left(\mathbf{h}_1 .\mathbf{u}_{\{1,2,3,4\}}^{\{1,2\}}\right)L_{\{1,2\}}^1(A_2,B_1)+\left(\mathbf{h}_1 .\mathbf{u}_{\{1,2,3,4\}}^{\{1,3\}}\right)L_{\{1,3\}}^1(A_3,C_1)+\left(\mathbf{h}_1 .\mathbf{u}_{\{1,2,3,4\}}^{\{1,4\}}\right)L_{\{1,4\}}^1(A_4,D_1) \\ 
&=& L^1(A_2,A_3,A_4,C_1,B_1,D_1).
\end{eqnarray}
As this user has cached $B_1,C_1,D_1$ in the first phase, it can remove these terms from this linear combination to obtain
\begin{eqnarray}
\nonumber L(A_2,A_3,A_4).
\end{eqnarray}
Thus, user $1$ can recover a linear combination of its requested sub-files from $\mathbf{h}_1 .\mathbf{X}_1$. From, $\mathbf{h}_1 .\mathbf{X}_2$ and $\mathbf{h}_1 .\mathbf{X}_3$ he can obtain two other independent linear combinations from which he can recover all three subfiles $A_2,A_3,A_4$. Since he already has $A_1$ in his cache, he can decode the whole $A$ file. Similarly, all the other users can also decode their requested files.

The transmit block stated in (\ref{Eq_L3_K4_N4_M1_transmit}) consists of one $3$-by-$\frac{3F}{4m}$ vectors, resulting in $T_C=\frac{3}{4}\frac{F}{m}$ time slots.

\item $M=2$
In this case, we only activate two of the servers and thus the problem reduces to the case with $L=2,N=4,K=4$ for which we achieved $T_C=\frac{1}{2}\frac{F}{m}$.

\item $M=3$
In this case, we only activate one server and thus the problem reduces to \cite{Maddah-Ali_Fundamental_2014} with $T_C=\frac{1}{4}\frac{F}{m}$.

\item $M=4$
In this case we have $T_C=0$.
\end{itemize}

\end{document}